 \journalname{Cryptography and Communications}
\begin{document}

\title{Lifted Codes and Lattices from Codes Over Finite Chain Rings%\thanks{Grants or other notes
%about the article that should go on the front page should be
%placed here. General acknowledgments should be placed at the end of the article.}
}

%\titlerunning{Short form of title}        % if too long for running head

\author{Reguia Lamia Bouzara        \and
        Kenza Guenda \and Edgar Mart\'inez-Moro
}

%\authorrunning{Short form of author list} % if too long for running head

\institute{R.L. Bouzara \and K. Guenda\at
              Faculty of Mathematics USTHB\\ University of Science and Technology of Algiers,
              Algeria\\
              \email{bouzaralamia@outlook.fr, kguenda@usthb.dz}           %  \\
%             \emph{Present address:} of F. Author  %  if needed
           \and
           E. Mart\'inez-Moro\at
             Institute of Mathematics, University of Valladolid, Spain. \\
             \email{edgar.martinez@uva.es}\\
             Partially supported by the Spanish State Research Agency (AEI) under
             Grant PGC2018-096446-B-C21.
}

\date{Received: \today / Accepted: date}
% The correct dates will be entered by the editor

\maketitle

\begin{abstract}
In this paper we give the generalization of lifted codes over any finite chain ring. This has been done by using the construction of finite chain rings from $p$-adic fields. Further we propose a lattice construction from linear codes over finite chain rings using lifted codes. 
\keywords{Codes over rings \and Lifted codes \and Lattices}
% \PACS{PACS code1 \and PACS code2 \and more}
% \subclass{MSC code1 \and MSC code2 \and more}
\end{abstract}

\section{Introduction} \label{intro}

Codes over finite rings have received significant attention in recent decades. Several authors have studied these codes due to their relationship with lattices construction, among other properties. The class of $p$-adic codes was introduced in \cite{clod}. Calderbank and Sloane investigated codes over $p$-adic integers and studied lifts of codes over $\mathbb{F}_p$ and $\mathbb{Z}_{p^e}$.

 Lifted codes over finite chain rings were studied in \cite{douglif}, however this study was restricted to the finite chain rings of the form $\mathbb{F}_q[t]/\langle t^k\rangle$ as pointed out by the reviewer in \cite{Wood}. Later, Young Ho generalized the construction of cyclic lifted codes for arbitrary finite fields to codes over Galois rings $GR(p^e,r)$ in \cite{cyclif}.
 
 In \cite{Wood}, the reviewer stated that a unified treatment valid for all chain rings would certainly be desirable. Therefore, this study investigates the structure of finite chain rings as non-trivial quotient of ring integers of $p$-adic fields to generalize the construction in \cite{douglif}.
 
 A finite commutative chain ring is a finite local ring whose maximal ideals are principal. Any finite chain ring can be constructed from $p$-adic fields (see for example \cite{hou1}) as follows: let $K$ be a finite extension of the field of $p$-adic numbers $\mathbb{Q}_p$ with residue degree $r$ and ramification index $s$, let $\mathcal{O}_K$ be the ring of integers of $K$ and let $\pi$ be a prime of $K$. Then $\mathcal{O}_K/\langle\pi^{(n-1)s+t}\rangle$ is a finite commutative chain ring of invariants  $(p,n,r,s,t)$. Every finite commutative chain ring can be obtained in this way.
 
 This paper defines finite chain rings as non-trivial quotient of ring integers of $p$-adic fields to provide a general and a unified treatment of lifted codes for any finite chain rin. This definiton also allows to introduce a general construction of lifted cyclic codes that can be used to lift codes over finite fields $\mathbb{F}_p^r$ to codes over finite chain rings. Then the lifted codes are used to give a general construction $A$ of lattices from codes over finite chain rings that generalizes the construction of lattices in \cite{oggier}.\\
 The structure of the paper is as follows: Section~\ref{sec:1} introduces the $p$-adic fields and their extensions and, Section~\ref{sec:codes} describes the connection between $p$-adic fields and the construction of a finite chain ring, based on this fact, the construction of lifted codes is generalized. Section~\ref{sec:3} provides some definitions of lattices over $p$-adic integers showing a lattice construction over finite chain rings from codes and its properties and finally, Section~\ref{sec:5} concludes the paper.

\section{Preliminaries} \label{sec:1}

\subsection{$p$-adic fields}
The results presented in this section can be found in \cite{iwasawa,book}. 
Let $p$ be a prime number, and let $x$ be an element of the rational field $\mathbb{Q}$, then, $x$ can be written in a unique way as $x=p^{a}\frac{m'}{n'},$ where, $m' \in \mathbb{Z}^{*}$ and $n'\in \mathbb{N}^{*}$, and  both $m'$ and $n'$ are not divisible by $p$.
The \emph{ $p$-adic valuation} $v_p(x)$ of $x\in \mathbb Q$ is defined as: 
\begin{equation*}
    v_p(x)=
    \begin{cases}
      +\infty & \hbox{ if } x=0 \\
      a     & \hbox{ if }  x=p^{a}\frac{m'}{n'}, p\nmid m' \cdot n' 
    \end{cases}
  \end{equation*}
and the $p$-adic norm (which is an ultra-metric absolute value) is given by: 
$\vert x \vert_p=p^{-v_{p}(x)}$.
The completion of $\mathbb{Q}$ by the absolute value $\vert \cdot\vert_p$ is the field of $p$-adic numbers and is denoted as usual by $\mathbb{Q}_p$. The next proposition characterize the field $\mathbb{Q}_p$ as follows: 

\begin{proposition} Given $v_p$, $\vert \cdot\vert_p$ and $\mathbb{Q}_p$ defined as above:
\begin{enumerate}
\item The set $$\mathbb{Z}_p=\lbrace a\in \mathbb{Q}_p\mid v(a)\geqslant 0\rbrace=\lbrace a\in \mathbb{Q}_p\mid \vert a\vert\leqslant 1\rbrace$$ is a unitary ring  called the valuation ring or the ring of integers of $\mathbb{Q}_p$.
\item $\mathbb{Q}_p=\mathcal{F}(\mathbb{Z}_p)$ where $\mathcal{F}(\mathbb{Z}_p)$ is the fraction ring of $\mathbb{Z}_p.$

\item The set $\mathfrak{m}= \lbrace a\in \mathbb{Q}_p\mid v(a)> 0\rbrace=\lbrace a\in \mathbb{Q}_p\mid\vert a\vert < 1\rbrace,$ is the unique maximal ideal of $\mathbb{Z}_p$ and it is generated by $p$, i.e. $\mathfrak{m}=\langle p\rangle$. The ideal 
$\mathfrak{m}$ is called ideal of the valuation $v_p$.

\item By the previous item,   $\mathbb{Z}_p$ is a local ring and  the quotient ring $\mathbb{Z}_p/\mathfrak{m}=\mathbb{F}_p$  is the finite field with $p$ elements which is  the residual field of the valuation $v_p$.

\item Any  $a\in \mathbb{Z}_p$  can be written as follows:
$$a=\sum_{i=0}^{+\infty} a_i p^i\textit{ ; }a_i\in \mathbb{F}_p.$$ 
\end{enumerate}
\end{proposition}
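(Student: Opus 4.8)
The plan is to exploit two structural properties of the $p$-adic absolute value throughout: its multiplicativity $|xy|_p=|x|_p|y|_p$ (equivalently $v_p(xy)=v_p(x)+v_p(y)$) and the ultrametric inequality $|x+y|_p\leqslant\max(|x|_p,|y|_p)$ (equivalently $v_p(x+y)\geqslant\min(v_p(x),v_p(y))$). The equivalence of the two descriptions of $\mathbb{Z}_p$ in item (1) is immediate from $|a|_p=p^{-v_p(a)}$, since $v_p(a)\geqslant 0\iff|a|_p\leqslant 1$. To see that $\mathbb{Z}_p$ is a subring I would check closure directly: if $v_p(a),v_p(b)\geqslant 0$ then $v_p(a+b)\geqslant\min(v_p(a),v_p(b))\geqslant 0$ and $v_p(ab)=v_p(a)+v_p(b)\geqslant 0$, while $v_p(-a)=v_p(a)$ and $v_p(1)=0$; hence $\mathbb{Z}_p$ is closed under the ring operations and contains $0$ and $1$. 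For item (2) I would argue that any nonzero $x\in\mathbb{Q}_p$ with $v_p(x)=a$ can be brought into $\mathbb{Z}_p$ after clearing the negative part of its valuation: the element $p^{\max(0,-a)}x$ has non-negative valuation, so $x$ is a quotient of the two elements $p^{\max(0,-a)}x$ and $p^{\max(0,-a)}$ of $\mathbb{Z}_p$, exhibiting $\mathbb{Q}_p$ as the field of fractions of $\mathbb{Z}_p$.

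For items (3) and (4) the key observation is that $a\in\mathbb{Z}_p$ is a unit precisely when $v_p(a)=0$: if $v_p(a)=0$ then $v_p(a^{-1})=0$ so $a^{-1}\in\mathbb{Z}_p$, and conversely a unit must have valuation $0$ by multiplicativity. Therefore $\mathfrak{m}=\{a:v_p(a)>0\}$ is exactly the set of non-units; it is an ideal by the same ultrametric and multiplicativity estimates used in item (1), and since every element outside it is invertible, any proper ideal is contained in $\mathfrak{m}$. This makes $\mathfrak{m}$ the unique maximal ideal and $\mathbb{Z}_p$ a local ring. I would identify $\mathfrak{m}=\langle p\rangle$ by noting that $v_p(a)\geqslant 1$ holds iff $a/p\in\mathbb{Z}_p$, i.e. $a\in p\mathbb{Z}_p$. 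For the residue field, the composite $\mathbb{Z}\to\mathbb{Z}_p\to\mathbb{Z}_p/\mathfrak{m}$ has kernel $p\mathbb{Z}$, giving an embedding $\mathbb{F}_p\hookrightarrow\mathbb{Z}_p/\mathfrak{m}$, and surjectivity follows from the construction of the leading digit in item (5), whence $\mathbb{Z}_p/\mathfrak{m}\cong\mathbb{F}_p$.

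Finally, item (5) is the crux and where I expect the real work. Given $a\in\mathbb{Z}_p$ I would construct the digits by successive approximation: choose $a_0\in\{0,\ldots,p-1\}$ representing $a\bmod\mathfrak{m}$, so that $v_p(a-a_0)\geqslant 1$ and hence $(a-a_0)/p\in\mathbb{Z}_p$; repeating the procedure on $(a-a_0)/p$ yields $a_1$, and inductively a sequence of digits $a_i\in\mathbb{F}_p$. The partial sums $s_n=\sum_{i=0}^{n}a_ip^i$ then satisfy $v_p(a-s_n)\geqslant n+1$, so $|a-s_n|_p\to 0$. The main obstacle is precisely this convergence step: one must invoke the completeness of $\mathbb{Q}_p$ (built into its definition as the completion of $\mathbb{Q}$ with respect to $|\cdot|_p$) to conclude that the Cauchy sequence $(s_n)$ converges, and that its limit is $a$. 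Everything else in the proposition reduces to the two elementary valuation inequalities recorded at the outset.
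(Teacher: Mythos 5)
Your proposal is correct and is the standard textbook argument; the paper itself gives no proof of this proposition, deferring entirely to the cited references on $p$-adic fields, so there is nothing in the paper to diverge from. Your two organizing tools (multiplicativity of $v_p$ and the ultrametric inequality) carry items (1)--(4), and your successive-approximation construction of the digits, with the explicit appeal to completeness of $\mathbb{Q}_p$ to make the partial sums converge, correctly identifies the only non-formal step in item (5).
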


Let $K$ be an algebraic extension of $\mathbb{Q}_p$ of degree $n$ and  $x\in K$. $\mathrm{N}_{{\mathbb{Q}_p}|K}(x)$ is
the endomorphism of the $\mathbb{Q}_p$-vector space $K$ defined by multiplication by $x$. The characteristic polynomial of this endomorphism expressed as follows:
$$X^n+...+(-1)^n \mathrm{N}_{{\mathbb{Q}_p}|K}(x),$$
which is cancelled for $x$ in $K$. 
In other words,  $x$ is an integer of $K$ and 
$\mathbb{Z}_p=\mathcal{O}_K\cap \mathbb{Q}_p$, where, $\mathcal{O}_K$ denotes the ring of integers of $K$. Then, the relation
$$w_p(x)=\frac{1}{n}v_p(\mathrm{N}_{{\mathbb{Q}_p}|K}(x))$$
defines the unique valuation $w_p$ of $K$ extending the valuation $v_p$ of $\mathbb{Q}_p$.
Let $u$ be the restriction of the valuation $w_p$ to the elements in  $\mathbb{Q}_p$. The \emph{ramification index} of  $K|\mathbb{Q}_p$ is defined as follows:
$$e=[w(K^{*}):u(\mathbb{Q}_p^{*})]=w_p(p).$$

\begin{proposition} Let $K$ be an algebraic extension of $\mathbb{Q}_p$ of degree $n$. Then:
\begin{enumerate} 
\item The ring of integers of $K$ is given by
$$\mathcal{O}_K=\lbrace a\in \mathbb{Q}_p\mid  v_p(a)\geqslant 0\rbrace.$$
\item The maximal ideal of the ring $\mathcal{O}_K$ is generated by an element $\pi$ called the uniformizer and $\mathfrak{m}_K=\langle\pi\rangle.$
\item The residual field is $\mathcal{O}_K/\mathfrak{m}_K=\mathbb{F}_{p^f}$ and $f$ is considered the inertial degree.
\item If $p$ is an uniformizer of $\mathbb{Q}_p$ and $\pi$ is an uniformizer of $K$, then
 $$\vert p\vert_p=\vert \pi\vert_p^e,$$
where, $e$ is ramification index.
\item $[K:\mathbb{Q}_p]=n=ef$.
\item An element $\alpha$  of $\mathcal{O}_K$ can be written as follows:
$$\alpha= \sum_{i=0}^{+\infty} a_i \pi^{i}$$
where, the $a_i$'s are elements of $\mathcal{O}_K/\mathfrak{m}_K=\mathbb{F}_{p^f}$.
\end{enumerate}
\end{proposition}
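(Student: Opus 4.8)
The plan is to derive all six statements from one structural fact: since $K$ is a finite extension of the \emph{complete} field $\mathbb{Q}_p$, the valuation $w_p$ constructed above (the unique extension of $v_p$, given by the norm formula) is again discrete, and $K$ is complete with respect to it, so that $\mathcal{O}_K$ is a complete discrete valuation ring. I would first settle item (1) by showing that the valuation ring $\{a\in K\mid w_p(a)\geqslant 0\}$ coincides with the integral closure of $\mathbb{Z}_p$ in $K$. The inclusion of the integral closure into the valuation ring is immediate, since any root of a monic polynomial over $\mathbb{Z}_p$ has non-negative valuation. For the reverse inclusion I would take $a$ with $w_p(a)\geqslant 0$, write its minimal polynomial over $\mathbb{Q}_p$, and use the norm formula $w_p(a)=\tfrac{1}{n}v_p(\mathrm{N}(a))$ together with the fact that the same estimate applies to all conjugates to conclude that every coefficient lies in $\mathbb{Z}_p$. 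This identifies $\mathcal{O}_K$ with the valuation ring of $w_p$.

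Items (2)--(4) then follow quickly from the discrete valuation ring structure. The value group $w_p(K^{*})$ is a discrete subgroup of $\mathbb{Q}$ containing $v_p(\mathbb{Q}_p^{*})=\mathbb{Z}$ with finite index $e$, hence equals $\tfrac{1}{e}\mathbb{Z}$; any element $\pi$ of minimal positive value $w_p(\pi)=\tfrac{1}{e}$ generates the maximal ideal $\mathfrak{m}_K=\{a\mid w_p(a)>0\}$, giving (2). The quotient $\mathcal{O}_K/\mathfrak{m}_K$ is then a field extension of $\mathbb{Z}_p/\mathfrak{m}=\mathbb{F}_p$; calling its degree $f$ the inertial degree gives $\mathcal{O}_K/\mathfrak{m}_K=\mathbb{F}_{p^f}$, which is (3). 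Statement (4) is simply the translation of the relation $w_p(p)=e\,w_p(\pi)$ into the absolute value $|\cdot|_p=p^{-w_p(\cdot)}$, yielding $|p|_p=|\pi|_p^{e}$.

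The technical core is item (5), the fundamental identity $n=ef$, together with the $\pi$-adic expansion (6), and I would treat them simultaneously. Fix the uniformizer $\pi$ and lift an $\mathbb{F}_p$-basis $\bar\omega_1,\dots,\bar\omega_f$ of $\mathbb{F}_{p^f}$ to units $\omega_1,\dots,\omega_f\in\mathcal{O}_K$; the claim is that the $ef$ products $\omega_j\pi^{i}$ with $1\leqslant j\leqslant f$ and $0\leqslant i\leqslant e-1$ form a $\mathbb{Q}_p$-basis of $K$. Linear independence is the easier direction: grouping any vanishing $\mathbb{Q}_p$-combination by the power of $\pi$, each nonzero block-coefficient $\sum_j\lambda_{ij}\omega_j$ has valuation in $\mathbb{Z}=v_p(\mathbb{Q}_p^{*})$ because the $\bar\omega_j$ are $\mathbb{F}_p$-independent, so the terms $\bigl(\sum_j\lambda_{ij}\omega_j\bigr)\pi^{i}$ acquire pairwise distinct valuations in the distinct cosets $\tfrac{i}{e}+\mathbb{Z}$ and cannot cancel. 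Spanning is where completeness is indispensable and is the main obstacle: given $\alpha\in\mathcal{O}_K$ one reduces modulo $\mathfrak{m}_K$ to extract a residue representative, subtracts it, divides by $\pi$, and iterates, generating a series whose partial sums lie in the $\mathbb{Z}_p$-span of the $\omega_j\pi^{i}$. The limit exists precisely because $K$ is complete, and this successive-approximation argument delivers both the unique expansion $\alpha=\sum_{i\geqslant 0}a_i\pi^{i}$ with $a_i$ in the chosen representatives of $\mathbb{F}_{p^f}$, which is (6), and the spanning property that forces $n=ef$ in (5).
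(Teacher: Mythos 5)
Your proposal is correct, and it is essentially the standard development of the structure theory of local fields. Note, however, that the paper itself offers no proof of this proposition: it sits in the preliminaries with a pointer to \cite{iwasawa,book}, so there is no in-paper argument to compare against --- what you have written is precisely the argument those references give. Two small points worth tightening. First, you tacitly (and rightly) correct the paper's statement of item (1), which reads $a\in\mathbb{Q}_p$ where it must read $a\in K$ with the extended valuation $w_p$; it would be worth saying so explicitly. Second, in the spanning half of items (5)--(6), your successive-approximation series a priori involves all powers $\pi^i$ with $i\geqslant 0$, whereas the claimed $\mathbb{Q}_p$-basis uses only $0\leqslant i\leqslant e-1$; you need the regrouping step $\pi^e=pu$ with $u\in\mathcal{O}_K^{\times}$, which shows the $\mathbb{Z}_p$-module $M$ generated by the $ef$ elements $\omega_j\pi^i$ satisfies $\mathcal{O}_K=M+p\mathcal{O}_K$, and then iteration plus completeness (the finitely generated $\mathbb{Z}_p$-module $M$ is closed and $p^k\mathcal{O}_K\to 0$) gives $\mathcal{O}_K=M$. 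This is a one-line addition, not a gap in the idea. Finally, in item (6) the coefficients $a_i$ are of course chosen \emph{representatives} in $\mathcal{O}_K$ of residue classes rather than elements of $\mathbb{F}_{p^f}$ itself; your phrasing handles this correctly where the paper's statement is loose.
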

Finally, the following theorem is a local version of the fact that if $K$ is a number field, then, $\mathcal{O}_K$ is a free
 $\mathbb{Z}_p$-module of rank $[K:\mathbb{Q}_p].$ 
\begin{theorem}
$\mathbb{Z}_p$-module $\mathcal{O}_K$ is free of rank 
$$n=[K:\mathbb{Q}_p]=ef.$$
If $\lbrace\alpha_1,\ldots ,\alpha_f\rbrace\subset \mathbb{Z}_p$ is a set such that the reductions $\overline{\alpha}_i$ generate $\mathbb{F}_{p^f}$ as an $\mathbb{F}_p-$vector space, the set 
 $\lbrace \alpha_j \pi_K^k\rbrace$ where $0\leq k\leq e$ and $ 1\leq j \leq f$  is an $\mathbb{Z}_p$-basis of  $\mathcal{O}_K.$ 
\end{theorem}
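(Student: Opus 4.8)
The plan is to prove the two assertions — freeness of $\mathcal{O}_K$ and the explicit basis — at once, by showing directly that the $ef$ elements $\alpha_j\pi_K^{k}$ (with $1\le j\le f$ and $0\le k\le e-1$, so that their number is exactly $ef=n$) are both $\mathbb{Q}_p$-linearly independent and generate $\mathcal{O}_K$ over $\mathbb{Z}_p$. Throughout I normalise the valuation $w$ of $K$ extending $v_p$ so that $w(\pi_K)=1$; then $w(p)=e$ and $p\mathcal{O}_K=\mathfrak{m}_K^{e}$. I use that $n=ef$ (item 5 of the preceding proposition), and I take the $\alpha_j\in\mathcal{O}_K$ whose reductions $\overline\alpha_j$ form an $\mathbb{F}_p$-basis of the residue field $\mathbb{F}_{p^f}=\mathcal{O}_K/\mathfrak{m}_K$.

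\emph{Step 1 (linear independence).} Suppose $\sum_{j,k}c_{jk}\alpha_j\pi_K^{k}=0$ with $c_{jk}\in\mathbb{Q}_p$ not all zero. Clearing a common power of $p$, I may assume $c_{jk}\in\mathbb{Z}_p$ with at least one coefficient a unit. Set $\beta_k=\sum_{j}c_{jk}\alpha_j\in\mathcal{O}_K$. For each $k$ with $\beta_k\ne 0$, factoring out $p^{m_k}$ with $m_k=\min_j v_p(c_{jk})$ leaves a combination whose reduction modulo $\mathfrak{m}_K$ is a nonzero $\mathbb{F}_p$-combination of the $\overline\alpha_j$, hence a unit; therefore $w(\beta_k)=m_k e\in e\mathbb{Z}$. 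Consequently $w(\beta_k\pi_K^{k})\equiv k\pmod e$, so the nonzero terms $\beta_k\pi_K^{k}$ have pairwise distinct valuations. By the ultrametric (isosceles) property of $w$ the sum then has finite valuation and cannot vanish, a contradiction. Hence every $\beta_k=0$, and the same factoring argument forces every $c_{jk}=0$. Thus the $ef$ elements are $\mathbb{Q}_p$-independent; since $[K:\mathbb{Q}_p]=ef$ they are a $\mathbb{Q}_p$-basis of $K$, and the $\mathbb{Z}_p$-module $M=\sum_{j,k}\mathbb{Z}_p\,\alpha_j\pi_K^{k}$ is free of rank $ef$ with this basis.

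\emph{Step 2 (generation).} First I show the images of the $\alpha_j\pi_K^{k}$ form an $\mathbb{F}_p$-basis of $\mathcal{O}_K/p\mathcal{O}_K=\mathcal{O}_K/\mathfrak{m}_K^{e}$. This quotient carries the filtration by the $\mathfrak{m}_K^{i}/\mathfrak{m}_K^{e}$, whose successive quotients $\mathfrak{m}_K^{i}/\mathfrak{m}_K^{i+1}$ are one-dimensional over $\mathbb{F}_{p^f}$ (generated by $\pi_K^{i}$), hence $f$-dimensional over $\mathbb{F}_p$, for a total dimension $ef$. For fixed $k=i$ the elements $\alpha_j\pi_K^{i}$ project in $\mathfrak{m}_K^{i}/\mathfrak{m}_K^{i+1}$ to $\overline\alpha_j\,\pi_K^{i}$, an $\mathbb{F}_p$-basis of that successive quotient since the $\overline\alpha_j$ are a basis of $\mathbb{F}_{p^f}$ and multiplication by $\pi_K^{i}$ is an isomorphism. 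The standard graded argument then shows all $ef$ residues together form a basis of $\mathcal{O}_K/p\mathcal{O}_K$, that is, $\mathcal{O}_K=M+p\mathcal{O}_K$. Substituting this identity into itself and using $pM\subseteq M$ gives $\mathcal{O}_K=M+p^{N}\mathcal{O}_K$ for every $N\ge 1$.

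To finish I pass to the completion. Since $K$ is complete and $\mathcal{O}_K$ is its closed unit ball, $\mathcal{O}_K$ is complete; and $M$, being the $\mathbb{Z}_p$-span of a $\mathbb{Q}_p$-basis of $K$, is a full lattice, so in the subspace topology it is homeomorphic to $\mathbb{Z}_p^{ef}$ (all norms on a finite-dimensional space over a complete field being equivalent) and therefore closed in $\mathcal{O}_K$. The relations $\mathcal{O}_K=M+p^{N}\mathcal{O}_K$ together with $\bigcap_N p^{N}\mathcal{O}_K=0$ say precisely that $M$ is dense in $\mathcal{O}_K$, and a closed dense subset equals the whole space, so $M=\mathcal{O}_K$. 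The routine core is the valuation bookkeeping of Step 1; the genuine obstacle is the generation in Step 2, where passing from a basis modulo $p$ to an honest $\mathbb{Z}_p$-basis is not formal but relies essentially on completeness (equivalently, on the closed-plus-dense argument, or on Nakayama once finite generation is in hand).
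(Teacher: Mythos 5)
The paper states this theorem without proof: it appears in the preliminaries as ``a local version of'' a classical fact, with the surrounding material attributed to the cited references on local fields, so there is no in-paper argument to compare against. Your proof is a correct, self-contained rendition of the standard argument. Step 1 is sound: the key observation that a $\mathbb{Z}_p$-combination $\beta_k=\sum_j c_{jk}\alpha_j$ with some unit coefficient has $w(\beta_k)\in e\mathbb{Z}$ (because its reduction is a nonzero element of $\mathbb{F}_{p^f}$, hence a unit after dividing by $p^{m_k}$) gives $w(\beta_k\pi_K^k)\equiv k\pmod e$, and the ultrametric inequality then kills any nontrivial vanishing combination. Step 2 correctly identifies the genuine content --- passing from spanning modulo $p$ to spanning over $\mathbb{Z}_p$ --- and your closed-plus-dense argument (or equivalently completeness of $\mathcal{O}_K$ together with the compactness, hence closedness, of the full lattice $M\cong\mathbb{Z}_p^{ef}$) is a legitimate way to finish; Nakayama would require first knowing $\mathcal{O}_K$ is a finitely generated $\mathbb{Z}_p$-module, which is not free. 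One further point in your favour: you silently correct the index range to $0\leq k\leq e-1$, which is necessary --- as printed in the statement, $0\leq k\leq e$ yields $(e+1)f>n$ elements, which cannot be a basis, so this is a typo in the paper rather than an error in your proof.
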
 

\subsection{Chain rings and $p$-adic fields}\label{sec:chain}

A finite commutative chain ring is a finite commutative  ring whose ideals form a chain under inclusion. A complete description of this type of rings can be found in \cite{mc}.  Let $R$ be a finite chain ring of characteristic $p^n$ with maximal ideal $\mathfrak{m}$ and nilpotency index $s$. Its residue field is $R/\mathfrak{m}=\mathbb F_{p^r}$ (the finite field with  $p^r$ elements). Every finite chain ring can be written as follows:
$$R=GR(p^n,r)[X]/\langle g(X),p^{n-1}X^t\rangle,$$
where, $GR(p^n,r)$ is the Galois ring of size ${p^n}^r$ and characteristic $p^n$, and  $g(X)$ is an Eisenstein polynomial in  $GR(p^n,r)[X]$ of degree $e$. The integers $(p,n,r,e,t)$ are called the invariants of the finite chain ring $R$. The ring $R$ can also be constructed from $p$-adic fields since finite commutative chain rings are the non-trivial quotients of rings of integers of $p$-adic fields \cite{hou1,hou2}. Next proposition summarizes the connection between $p$-adic fields and finite chain rings.

\begin{proposition}\cite{hou2}
Let K be a finite extension of $\mathbb{Q}_p$ such as $[K:\mathbb{Q}_p]=n$ with residue degree $r$ and ramification index $e$. Let $\mathcal{O}_K$ be the ring of integers of $K$ and $\pi$ a prime of $K$. Then:
$$\mathcal{O}_K/\pi^s\mathcal{O}_K\cong GR(p^n,r)[X]/\langle g(X),p^{n-1}X^t\rangle.$$
\end{proposition}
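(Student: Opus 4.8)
The plan is to exhibit an explicit isomorphism between the finite chain ring $\mathcal{O}_K/\pi^s\mathcal{O}_K$ and the prescribed Galois-ring quotient, using the tower of extensions inside $K/\mathbb{Q}_p$ to match invariants on both sides. First I would decompose the extension $K/\mathbb{Q}_p$ into its unramified and totally ramified parts: there is an intermediate field $K_0$ with $K_0/\mathbb{Q}_p$ unramified of degree $r$ (so $K_0$ is the field whose residue field is $\mathbb{F}_{p^r}$) and $K/K_0$ totally ramified of degree $e$, with $n=re$. The ring of integers $\mathcal{O}_{K_0}$ is then, by definition, a model of the $p$-adic Galois ring: reducing modulo $\pi^{(n-1)s+t}$ or, more cleanly, observing that $\mathcal{O}_{K_0}/p^n\mathcal{O}_{K_0}\cong GR(p^n,r)$, identifies the unramified part with the Galois ring $GR(p^n,r)$ appearing on the right.

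Second, I would handle the ramified part via the uniformizer. Since $K/K_0$ is totally ramified of degree $e$, the prime $\pi$ of $K$ satisfies an Eisenstein polynomial $g(X)\in\mathcal{O}_{K_0}[X]$ of degree $e$, and by the structure theorem quoted in Theorem~3 (with the basis $\{\alpha_j\pi^k\}$), $\mathcal{O}_K\cong\mathcal{O}_{K_0}[X]/\langle g(X)\rangle$ as $\mathcal{O}_{K_0}$-algebras, with $X\mapsto\pi$. Passing this identification to the quotient by $\pi^s\mathcal{O}_K$ and recalling that $p=u\pi^e$ for a unit $u$, so that $p^{n-1}$ corresponds to a power of $\pi$ of valuation $(n-1)e$, the relation $p^{n-1}X^t=0$ in the target ring should match exactly the vanishing forced by $\pi^s=0$ on the appropriate monomials, once $s$ is written as $(n-1)e+t$.

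The main obstacle, and the step deserving the most care, is the precise bookkeeping of the two relations $g(X)=0$ and $p^{n-1}X^t=0$ against the single relation $\pi^s\mathcal{O}_K=0$: one must verify that quotienting $\mathcal{O}_{K_0}[X]/\langle g(X)\rangle$ further by $\pi^s$ produces exactly the ideal $\langle p^{n-1}X^t\rangle$ and not something larger or smaller. Concretely, with $s=(n-1)e+t$ and $0\le t<e$, the element $\pi^s$ has valuation $s$, while $p^{n-1}X^t$ has valuation $(n-1)e+t=s$; since both generate the unique ideal of that valuation in a chain ring, they generate the same ideal, and the quotients coincide. I would therefore close the argument by checking that the invariants $(p,n,r,e,t)$ read off from $\mathcal{O}_K/\pi^s\mathcal{O}_K$ via its residue field $\mathbb{F}_{p^r}$, characteristic $p^n$, ramification $e$, and nilpotency datum $t$ agree with those of the Galois-ring presentation, so that the composite map $\mathcal{O}_K\to GR(p^n,r)[X]/\langle g(X),p^{n-1}X^t\rangle$ sending $\pi\mapsto X$ descends to a well-defined isomorphism on the quotient.
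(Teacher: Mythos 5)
The paper does not actually prove this proposition --- it is quoted verbatim from \cite{hou2} --- so there is no in-paper argument to compare against. Your sketch is a reconstruction of the standard proof, and its skeleton is sound: split $K/\mathbb{Q}_p$ into the unramified subfield $K_0$ of degree $r$ and the totally ramified step of degree $e$, write $\mathcal{O}_K\cong\mathcal{O}_{K_0}[X]/\langle g(X)\rangle$ with $g$ Eisenstein and $X\mapsto\pi$, and then identify the ideals $\langle \pi^s\rangle$ and $\langle p^{n-1}\pi^{t}\rangle$ by comparing valuations in the discrete valuation ring $\mathcal{O}_K$. Two points need tightening. First, the symbol $n$ is doing double duty in your write-up: in the opening paragraph you set $n=[K:\mathbb{Q}_p]=re$, but in the relation $s=(n-1)e+t$ and in $GR(p^n,r)$ the integer $n$ must be the characteristic exponent of the chain ring, i.e.\ the unique $n$ with $s=(n-1)e+t$ and $1\le t\le e$; these two integers coincide only by accident. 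The paper's own statement contains the same collision, so you inherited it, but a proof has to disentangle the two (and the normalization $1\le t\le e$, rather than your $0\le t<e$, is what makes the invariants unique). Second, to replace the coefficient ring $\mathcal{O}_{K_0}$ by $GR(p^n,r)=\mathcal{O}_{K_0}/p^n\mathcal{O}_{K_0}$ you must verify that $p^n$ already lies in the ideal $\langle g(X),p^{n-1}X^t\rangle$, equivalently that $v_\pi(p^n)=ne\ge (n-1)e+t=s$; this holds precisely because $t\le e$, and it is the step that makes ``passing the identification to the quotient'' legitimate rather than merely plausible. With those two repairs the argument is complete and agrees with the construction the paper imports from \cite{hou1,hou2}.
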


\section{Finite Chain Rings and Lifted Codes}\label{sec:codes}
A linear code $\mathcal C\subseteq R^m$ of length $m$ over a finite chain ring $R$ is a submodule of $R^m$. The length $m$ is assumed not to be divisible by the characteristic of the residue field $R/\mathfrak{m}=\mathbb{F}_{p^r}$. A matrix $G$ with entries in $R$, is called a generator matrix for the code $\mathcal C$ if its rows span $C$ and none of them can be written as an $R-$linear combination of other remaining rows of $G$. The generator  matrix in standard form if it is written as follows (see \cite{Norton1}):
	$$\begin{pmatrix}
	I_{k_0}  & A_{0,1}    & A_{0,2}       & A_{0,3}       & \cdots    &  \cdots                    &  A_{0,s}     \\
	0        & \pi I_{K_1}& \pi A_{1,2}   & \pi A_{1,3}   & \cdots    &  \cdots                    & \pi A_{1,s}   \\
	0        & 0          & \pi^2I_{k_2}  & \pi^2 A_{2,3} & \cdots    &  \cdots                    & \pi^2 A_{2,s}  \\
	\vdots   & \vdots     & 0             & \ddots        & \ddots    &                            & \vdots              \\
	\vdots   & \vdots     &  \vdots       & \ddots        & \ddots    &  \ddots                    & \vdots               \\
	0        & 0          &  0            & \cdots        & \cdots    &\pi^{s-1}I_{k_{s-1}}& \pi^{s-1}A_{s-1,s}
	\end{pmatrix},$$
		where, the columns are grouped into blocks of sizes $k_0,k_1,...,k_{s-1},m-\sum_{i=0}^{s-1}k_i$. And $\mathcal C$ is of type $1^{k_0}\pi^k-1(\pi^2)^{k_2}...(\pi^{s-1})^{k_{s-1}}.$ It is clear that
	$\vert C\vert=\vert M\vert^{\sum_{i=0}^{s-1}(s-i)k_i}.$
	and the rank of $\mathcal C$ is defined to be 
	$k(\mathcal C)=\sum_{i=0}^{s-1}k_i$.  

The type and rank are the invariants of the code. The  linear code $\mathcal C$ is considered to be free if its rank is equal to the maximum of the ranks of the free submodules of $\mathcal C$. Then the code $\mathcal C$ is considered as a free $R$-submodule which is isomorphic as a module to $R^{k(\mathcal C)}$. If we consider the standard inner product in $R^m$ given by  $x\cdot y =\sum x_i\cdot y_i$ where, $x,y\in R^m$, then, the dual code $\mathcal C^{\perp}$ of $\mathcal C$ is defined by $\mathcal C^{\perp}=\lbrace x\in R^m\mid x\cdot y=0 \textit{ for all } y \in\mathcal  C\rbrace.$ 
 If $\mathcal C\subseteq \mathcal C^{\perp}$ then, the code is self-orthogonal, and if $\mathcal C=\mathcal C^{\perp} $ then, the code is self-dual.

\subsection{Lifted Codes}
Let $\pi$ be a uniformizer of the valuation ring $\mathcal{O}_K$. For each $i\leq n$,
% $R_i=\mathcal{O}_K/\pi^i\mathcal{O}_K=\lbrace a_0+a_1\pi^1+...+a_{i-1}\pi^{i-1}\rbrace; a_i\in \mathcal{O}_K.$ 

 %Let $\Omega$ be an algebraic closure of $\mathbb{Q}_p$ and let $K$ be a finite extension of $\mathbb{Q}_p$ of degree $n$ with maximal ideal  $\mathfrak{m}_K=\langle\pi\rangle$.
 $$R_i=\mathcal{O}_K/\pi^i\mathcal{O}_K=\lbrace a_0+a_1\pi^1+...+a_{i-1}\pi^{i-1}\mid a_i\in \mathcal{O}_K\rbrace.$$
Since every finite chain ring is isomophic to a nontrivial quotient of rings of integers of $p$-adic fields, then, the $R_i$'s are finite chain rings with maximal ideal $\langle \pi\rangle.$
The \emph{ring of formal power series in $\pi$} with coefficient in a finite chain ring $R$ is defined to be the ring given by: 
$$R[[\pi]]=\left\{a(x)=\sum_{i=0}^{\infty}a_i\pi^i\mid a_i\in R\textit{ for all } i\in\mathbb{N}\right\}.$$
where, the addition and multiplication operator are defined as usual. \emph{The ring of power series} is defined as follows:
$$R_{\infty}=\left\{ \sum_{i=0}^{\infty} a_i\pi^{i} \textit{ ; } a_i \in \mathcal{O}_K\right\}.$$

\begin{theorem}
	The ring of formal power series in $\pi$ with coefficients in a nontrivial quotient rings of integers of $K$ is the ring of integers of $K$
	$$R_{\infty}=\mathcal{O}_K.$$ 
\end{theorem}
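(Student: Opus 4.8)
The plan is to read each element of $R_{\infty}$ not as a purely formal object but through the summation (evaluation) map into $K$, and then to prove the two inclusions $R_{\infty}\subseteq\mathcal{O}_K$ and $\mathcal{O}_K\subseteq R_{\infty}$ separately. The whole argument rests on the completeness of $K$ together with the ultrametric property of $\vert\cdot\vert_p$.

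First I would establish $R_{\infty}\subseteq\mathcal{O}_K$. Take $\alpha=\sum_{i=0}^{\infty}a_i\pi^i$ with every $a_i\in\mathcal{O}_K$, and set $s_N=\sum_{i=0}^{N}a_i\pi^i$. Since $\pi$ generates the maximal ideal we have $\vert\pi\vert_p<1$, while $\vert a_i\vert_p\leq 1$ for all $i$; the ultrametric inequality then gives $\vert s_{N+m}-s_N\vert_p\leq \vert\pi\vert_p^{\,N+1}\to 0$, so $(s_N)_N$ is a Cauchy sequence. Because $K$ is a finite extension of the complete field $\mathbb{Q}_p$, it is itself complete, hence $\alpha=\lim_N s_N$ exists in $K$. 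Each $s_N$ lies in $\mathcal{O}_K$, i.e. $\vert s_N\vert_p\leq 1$, and since the norm is continuous the limit satisfies $\vert\alpha\vert_p\leq 1$; thus $\alpha\in\mathcal{O}_K$.

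Next I would prove the reverse inclusion $\mathcal{O}_K\subseteq R_{\infty}$ using the $\pi$-adic expansion already recorded for $\mathcal{O}_K$: every $\alpha\in\mathcal{O}_K$ can be written as $\alpha=\sum_{i=0}^{\infty}a_i\pi^i$ with the coefficients $a_i$ drawn from a fixed set of representatives of the residue field $\mathbb{F}_{p^f}\subseteq\mathcal{O}_K$. Such a presentation is by definition an element of $R_{\infty}$, so $\mathcal{O}_K\subseteq R_{\infty}$. Combining the two inclusions yields $R_{\infty}=\mathcal{O}_K$ as sets, and a routine verification that summation carries the coefficientwise addition and the Cauchy-product multiplication of $R_{\infty}$ to the ring operations of $\mathcal{O}_K$ promotes this equality to a ring isomorphism.

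The step I expect to be the main obstacle is the first inclusion, precisely the claim that the limit of the partial sums stays inside the valuation ring rather than merely inside $K$. This is exactly where completeness of $K$ and the closedness of $\mathcal{O}_K$ in the $p$-adic metric are indispensable: without completeness the series need not converge at all, and without the ultrametric inequality one could neither control $\vert s_{N+m}-s_N\vert_p$ by the single term $\vert\pi\vert_p^{\,N+1}$ nor conclude $\vert\alpha\vert_p\leq 1$. Once convergence and the valuation bound are secured, the remaining verifications are purely formal bookkeeping.
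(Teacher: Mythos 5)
Your proposal is correct, but it takes a genuinely different route from the paper. The paper argues purely formally: it substitutes into each coefficient $a_i\in\mathcal{O}_K$ its canonical $\pi$-adic expansion $\sum_j b_j\pi^j$ with $b_j$ ranging over representatives of the residue field, regroups the resulting double series by total power of $\pi$, and identifies the outcome with the canonical expansion of an element of $\mathcal{O}_K$; convergence is never mentioned, and the two inclusions are not separated. You instead prove $R_{\infty}\subseteq\mathcal{O}_K$ analytically --- partial sums are Cauchy by the ultrametric inequality since $\vert\pi\vert_p<1$ and $\vert a_i\vert_p\leq 1$, completeness of $K$ gives a limit, and continuity of the absolute value keeps that limit in the closed unit ball $\mathcal{O}_K$ --- and then get $\mathcal{O}_K\subseteq R_{\infty}$ for free from the canonical expansion. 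Your version buys two things the paper's argument glosses over: it justifies that the infinite sums defining $R_{\infty}$ actually denote elements of $K$ at all, and it sidesteps the delicate point in the paper's regrouping, namely that a finite sum $\sum_{i+j=s}b_{ij}$ of residue-field representatives need not itself be a representative (one would have to handle carries to land back in the canonical form). The paper's approach, in exchange, is shorter and stays entirely at the level of formal series manipulation. Both are legitimate; yours is the more careful one.
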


\begin{proof}
	Every element $\alpha \in \mathcal{O}_K$ can be written in a unique way as
	$\alpha=\sum_{j=0}^{\infty}b_j\pi^{j}$ where,  $b_j\in \mathbb{F}_{p^r}$, then: 
	$$R_{\infty}=\left\{\sum_{i=0}^{\infty}\left(\sum_{j=0}^{\infty}b_j\pi^j\right)\pi^i\right\}=\left\{\sum_{s=0}^{\infty}\left(\sum_{i+j=s}b_{ij}\right)\pi^s\mid b_{ij}\in\mathbb{F}_{p^r}\right\}.$$
	Consider   $a_{s}=\sum_{i+j=s}b_{ij}$ with $b_{ij}\in \mathbb{F}_{p^r}$, thenm $\sum_{i+j=s}b_{ij}$ with $b_{ij}\in \mathbb{F}_{p^r}$ is a finite sum. Therefore
	$$R_{\infty}=\left\{\sum_{s}^{\infty} a_s\pi^s\mid a_s\in\mathbb{F}_{p^r}\right\} = \mathcal{O}_K.$$ \qed
\end{proof}

	The chain of ideals of $\mathcal{O}_K$ is given as follows (see \cite{iwasawa}):
	$$\lbrace 0\rbrace\subset \cdots  \langle\pi^n\rangle\subset \cdots \subset \langle\pi^2\rangle\subset \langle\pi\rangle\subset 
	\langle \pi^0\rangle=\mathcal{O}_K .$$ 
	Thus $R_\infty$ satisfies the ascending chain condition, i.e; $R_{\infty}=\mathcal{O}_{K}$ is a Noetherian ring. Moreover, it is a Euclidean domain (and therefore a Dedekind domain). Indeed, 
	if $a $ and $b\neq 0$ are in $R_{\infty}$, then there are $q$ and $r$ in $R_{\infty}$ such that $a=bq+r$ and either $r=0$ or $f(r)< f(b)$, where $f$ is a function from $R_{\infty}$ to $\mathbb{Z}^{+}.$
Moreover, $V:R_{\infty}\rightarrow\mathbb{Z}^+$ is the function defined by 
$V(0)=0$ and $V(r)=v(r)$ if $r\neq 0.$\\
If $v(a)\geq v(b)$ then $v(a/b)=v(a)_v(b)\geq 0$ and if  $q=a/b\in R_{\infty}$, then, $r=0$.
Thus, if $v(a)<v(b)$, then, $q=0$ and $r=a$.

A submodule  $\mathcal C$ of  rank $k$ over $R_{\infty}^n$ is called {\itshape $\pi$-adic code of length $n$} and rank $k$. 
	Let $\mathcal C$ be a nonzero linear code over $R_{\infty}$ of length $n$, then any generator matrix of $\mathcal C$ is permutation-equivalent to  a matrix of the following form
	\begin{equation}
	G=\begin{pmatrix}
	\pi^{m_0}I_{k_0} & \pi^{m_0}A_{0,1} & \pi^{m_0}A_{0,2}& \pi^{m_0}A_{0,3}&      & & \pi^{m_0}A_{0,r}\\
	& \pi^{m_1}I_{k_1} & \pi^{m_1}A_{1,2}& \pi^{m_1}A_{1,3}&      & & \pi^{m_1}A_{1,r} \\
	&                  & \pi^{m_2}I_{k_2}& \pi^{m_2}A_{2,3}&      & & \pi^{m_2}A_{2,r}  \\
	&                  &                 & \ddots          &\ddots& &                    \\
	&                  &                 &                 &\ddots&\ddots&                \\
	&                  &                 &                 & & \pi^{m_{r-1}}I_{k_{r-1}}& \pi^{m_{r-1}}A_{r-1,r}
	\end{pmatrix}.
	\label{eq:sd} 
	\end{equation}
	The code $\mathcal C$ with generator matrix of this form is said to be of type 
	$$(\pi^{m_0})^{k_0}(\pi^{m_1})^{k_1}\cdots (\pi^{m_{r-1}})^{k_{r-1}},$$
	where $k=k_0+k_1+\cdots+k_{r-1}$ is called its rank and $k_r=n-k$. 
	%We say that $C$ is a $\pi$-adic $[n,k]$-code.
For two integers $i<j$, a map as in \cite{douglif}: 
\begin{equation}
\begin{array}{cccc}
\Psi^j_i: & R_j & \rightarrow & R_i\\
&\sum_{l=0}^{j-1}a_l\pi^l  & \mapsto & \sum_{l=0}^{i-1} a_l\pi^l.
\end{array}
\label{eq:m} 
\end{equation}
If  $R_j$ is replaced with $R_{\infty}$ then,  $\Psi^{\infty}_i$ is denoted by $\Psi_i$.
For any two elements $a,b\in R_{\infty}$ we have:
$$\Psi_i(a+b)=\Psi_i(a)+\Psi_i(b),\textit{ } \Psi_i(ab)=\Psi_i(a)\Psi_i(b).$$
The two maps $\Psi_i$ and $\Psi^j_i$ can be extended naturally from $R_{\infty}^n$ to $R^n_i$
and $R_j^n$ to $R_i^n$ respectively.

  \begin{remark}
  	Based in the above construction in (\ref{eq:m}) the following series of chain rings is obtained:
  	$$R_{\infty}\rightarrow\cdots\rightarrow R_{s}\cdots\rightarrow R_{s-1}\rightarrow R_{m}\rightarrow\cdots\rightarrow R_1$$
  	for $1\leq t^{'}\leq t \leq e$ ($e$ is the degree of the Eisenstein polynomial $g$) . Note that  
  	 $R_1=\mathcal{O}_K/\pi\mathcal{O}_K \cong \mathbb{F}_{p^r}$
  		, $R_{m}=\mathcal{O}_K/\pi^{m}\mathcal{O}_K$
  		, $R_{s-1}=\mathcal{O}_K/\pi^{s-1}\mathcal{O}_K$
  		and $R_{s}=\mathcal{O}_K/\pi^{s}\mathcal{O}_K$.
  \end{remark}
  
  In this step the lifts of a code $\mathcal C$ over a finite chain ring are defined in a similar way as described in \cite{douglif}, but using this more general setting.
  
  \begin{definition}
  	Let $i,j$ be two integers such that $1\leq i \leq j < \infty.$ An $[n,k]$ code $C_1$ over $R_i$ lifts to an $[n,k]$ code $C_2$ over $R_j$, denoted by $C_1\leq C_2,$ if $C_2$ has a generator matrix $G_2$ where $\Psi_i^j(G_2)$ is a generator matrix of $C_1.$ 
  \end{definition}
It can be proven (the proof in \cite{douglif} can be followed in our more general setting) that $C_1=\Psi_i^j(C_2).$ If $C$ is a $[n,k]$-($\pi$-adic) code, then for any $i<\infty,$  $\Psi_i(C)$ will be called {\it the projection} of $C$.  $\Psi_i(C)$ is denoted by $C^i$ and we have  the following result:
  \begin{lemma}
  	Let $C$ be a linear code over $R_i$ and $\tilde{C}$ be the lifted code of $C$ over $R_j$, where $i<j\leqslant \infty$. Then if $C$ is free over $R_i$ then $\tilde{C}$ is free over $R_j$.  	
  \end{lemma}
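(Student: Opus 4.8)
The plan is to reduce the problem to the behaviour of a generator matrix under the truncation map $\Psi_i^j$, exploiting that freeness of a code over a finite chain ring is detected by the existence of a generator matrix with an invertible (indeed identity) leading block. First I would record the relevant normal form: since $C$ is free of rank $k$ over $R_i$, by the standard-form description recalled above (see \cite{Norton1}) it admits a generator matrix which, after a suitable permutation of coordinates, has the shape $G_1 = (I_k \mid A)$ with $A$ a $k \times (n-k)$ matrix over $R_i$. Here the absence of all blocks attached to positive powers of $\pi$ is exactly the statement that $C \cong R_i^{\,k}$, so we may assume $G_1$ is already in this form.

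Next I would pull this form back through the lift. By the definition of $C \le \tilde{C}$, the code $\tilde{C}$ has a generator matrix $G_2$ over $R_j$ with $\Psi_i^j(G_2) = G_1$; since $\Psi_i^j$ acts entrywise it does not change the number of rows, so $G_2$ is again a $k \times n$ matrix. Writing $G_2 = (B \mid D)$ with $B$ the leading $k \times k$ block, the relation $\Psi_i^j(B) = I_k$ says precisely that $B \equiv I_k \pmod{\pi^i}$, and in particular $B \equiv I_k \pmod{\pi}$.

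The key step is then invertibility. Since $R_j$ (for $i < j \le \infty$, including $R_\infty = \mathcal{O}_K$) is a local ring with maximal ideal $\langle \pi \rangle$ and residue field $\mathbb{F}_{p^r}$, a square matrix over $R_j$ is invertible if and only if its reduction modulo $\pi$ is invertible over $\mathbb{F}_{p^r}$. Here that reduction is $I_k$, so $\det B$ is a unit and $B \in GL_k(R_j)$. I would then left-multiply by $B^{-1}$: because multiplication by an invertible matrix preserves the row span, $B^{-1}G_2 = (I_k \mid B^{-1}D)$ is again a generator matrix of $\tilde{C}$, and its form $(I_k \mid \ast)$ forces the $k$ rows to be $R_j$-linearly independent. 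Hence $\tilde{C} \cong R_j^{\,k}$ is free of rank $k$, which incidentally shows the rank is preserved under lifting.

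The main obstacle, and the only point requiring genuine care, is the invertibility argument: one must know that a congruence $B \equiv I_k \pmod{\pi^i}$ over $R_j$ already forces $B$ to be a unit of the matrix ring, which rests on the locality of $R_j$ rather than on any finer structure. Everything else is bookkeeping with the truncation map, whose ring-homomorphism property (stated above for $\Psi_i$ and shared by $\Psi_i^j$) guarantees that the identity block and the column permutation survive reduction. The case $j = \infty$ needs no separate treatment, since $\mathcal{O}_K$ is likewise local with maximal ideal $\langle \pi \rangle$.
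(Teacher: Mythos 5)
The paper states this lemma without any proof, so there is nothing to compare your argument against; what you have written would in fact supply the missing proof, and its core idea is sound. The one step you should tighten is the very first reduction: the definition of lifting only guarantees that $\Psi_i^j(G_2)$ is \emph{some} generator matrix of $C$, not the standard-form one $G_1=(I_k\mid A)$ that you fixed in advance. You can repair this in a line in either of two ways. Either note that if $U\in GL_k(R_i)$ row-reduces $\Psi_i^j(G_2)$ to $(I_k\mid A)$ after the column permutation, then any matrix $\widetilde U$ over $R_j$ with $\Psi_i^j(\widetilde U)=U$ is itself invertible (its determinant maps to the unit $\det U$, and $\Psi_i^j$ reflects units because both rings are local with maximal ideal $\langle\pi\rangle$), so $\widetilde U G_2$ is a generator matrix of $\tilde C$ whose image is exactly $G_1$; or, more directly, observe that since the $k$ rows of any generator matrix of the free code $C$ form a basis, some $k\times k$ minor of $\Psi_i^j(G_2)$ is a unit of $R_i$, hence the corresponding minor of $G_2$ is a unit of $R_j$, which already gives the $R_j$-linear independence of the rows of $G_2$. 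From there your locality argument --- a square matrix over $R_j$ (including $R_\infty=\mathcal{O}_K$) congruent to $I_k$ modulo $\pi$ is invertible --- is exactly the right key step, and it correctly yields $\tilde C\cong R_j^{\,k}$ with the rank preserved.
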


\subsection{Cyclic codes}

A cyclic code of length $m$ over the ring of integers $\mathcal{O}_K$ is a linear code $\mathcal C$ such that if $(c_0,c_1,\cdots,c_{m-1})\in \mathcal C$, then $(c_{m-1},c_0,\cdots,c_{m-2})\in \mathcal C$. 
 The codewords of a cyclic code over $\mathcal{O}_K$ are represented as usual by polynomials, more precisely they are the ideals of the ring $\mathcal{O}_K[X]/\langle X^{n}-1\rangle$. Furthermore,  a general construction of lifting cyclic codes which generalize the construction giving in \cite{cyclif} is presented. This general construction allows  to lift cyclic codes over finite fields $\mathbb{F}_{p^r}$ to finite chain rings and to the ring of integers $\mathcal{O}_K$.  We will need the Hensel's lemma (see \cite{Neu} for a proof) for the construction.
 
 \begin{theorem}\label{hensel}{(Hensel's Lemma)}
 	Let $K$ be a finite extension of $\mathbb{Q}_p$ of degree $n$, and let $\mathcal{O}_K$ be the ring of integers of $K$ with maximal ideal $\mathfrak{m}=\langle\pi\rangle$, and residue field $k:=\mathcal{O}_K/\langle\pi\rangle$. Let $f\in\mathcal{O}_{K}[X]$ and let $\overline{f}$ be its image in $k[X]$.
 	Let $\overline{g}$, $\overline{h}$ be two coprime polynomials of $k[X]$ such that $\overline{f}=\overline{g}\overline{h}$, then, there exist $g$, $h\in \mathcal{O}_{K}[X]$ for which $f=g h$ and $g\equiv\overline{g}[\pi]$ and $\overline{h}\equiv h[\pi]$ with $\textit{deg } g=\textit{deg } \overline{g}$.
 \end{theorem}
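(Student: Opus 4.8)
The plan is to prove the statement by the classical method of successive approximation (Hensel lifting), the one essential analytic ingredient being the completeness of $\mathcal{O}_K$ with respect to the $\pi$-adic valuation. Recall that since $K$ is a finite extension of $\mathbb{Q}_p$ it is complete, and $\mathcal{O}_K$ is closed in $K$, so every $\pi$-adically Cauchy sequence of elements of $\mathcal{O}_K$ converges in $\mathcal{O}_K$ (this is precisely the content already recorded for $R_\infty=\mathcal{O}_K$ above). First I would fix arbitrary lifts $g_0,h_0\in\mathcal{O}_K[X]$ of $\overline g$ and $\overline h$, taking $g_0$ monic with $\deg g_0=\deg\overline g$; this is the case relevant to the application $f=X^n-1$, whose factors over $k$ are monic, and it is what makes the degree control transparent. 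By construction $f\equiv g_0h_0\pmod\pi$. Since $k[X]$ is a principal ideal domain and $\overline g,\overline h$ are coprime, B\'ezout yields $\overline a\,\overline g+\overline b\,\overline h=1$ in $k[X]$; lifting coefficients gives $a,b\in\mathcal{O}_K[X]$ with $ag_0+bh_0\equiv 1\pmod\pi$.

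Next I would construct, by induction on $m\ge 0$, polynomials $g_m,h_m\in\mathcal{O}_K[X]$ with $\deg g_m=\deg\overline g$ satisfying
\begin{equation*}
f\equiv g_mh_m\pmod{\pi^{m+1}},\qquad g_m\equiv g_{m-1}\pmod{\pi^m},\qquad h_m\equiv h_{m-1}\pmod{\pi^m}.
\end{equation*}
For the inductive step, write $f-g_mh_m=\pi^{m+1}r_m$ with $r_m\in\mathcal{O}_K[X]$ and seek corrections $g_{m+1}=g_m+\pi^{m+1}\gamma$, $h_{m+1}=h_m+\pi^{m+1}\eta$. Expanding the product and discarding the term $\pi^{2(m+1)}\gamma\eta$, which already lies in $\pi^{m+2}\mathcal{O}_K[X]$, the requirement $f\equiv g_{m+1}h_{m+1}\pmod{\pi^{m+2}}$ collapses to the single congruence $\gamma h_0+\eta g_0\equiv r_m\pmod\pi$. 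Multiplying the B\'ezout identity by $r_m$ solves it with $\gamma=r_mb$ and $\eta=r_ma$. To stop the degree of the monic factor from growing, I would replace $\gamma$ by its remainder $\gamma'$ on Euclidean division by the monic $g_0$ and absorb the quotient into $\eta$; then $\deg\gamma'<\deg g_0$, so $g_{m+1}$ keeps degree $\deg\overline g$ and the same leading coefficient, while the congruence is unaffected.

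Finally I would pass to the limit. The relations $g_{m+1}\equiv g_m\pmod{\pi^{m+1}}$ and $h_{m+1}\equiv h_m\pmod{\pi^{m+1}}$ show that, coefficient by coefficient, $(g_m)$ and $(h_m)$ are $\pi$-adically Cauchy; by completeness of $\mathcal{O}_K$ they converge to $g,h\in\mathcal{O}_K[X]$ with $g\equiv g_0\equiv\overline g$ and $h\equiv h_0\equiv\overline h\pmod\pi$. The equality $\deg g=\deg\overline g$ survives in the limit because the leading coefficient of $g_m$ never moves, and letting $m\to\infty$ in $f\equiv g_mh_m\pmod{\pi^{m+1}}$ together with $\bigcap_m\pi^m\mathcal{O}_K=\{0\}$ gives $f=gh$ exactly; the degree of $h$ is then forced to be $\deg f-\deg g$ by monicity of $g$.

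I expect the main obstacle to be the simultaneous control of degree and convergence: the raw B\'ezout correction can inflate the degree of the factor $g$, so the Euclidean-division normalization in the inductive step is indispensable, and one must verify that this normalization disturbs neither the congruence $g_{m+1}\equiv g_m\pmod{\pi^{m+1}}$ nor the leading coefficient of $g_m$. Everything else reduces to the routine $\pi$-adic Cauchy bookkeeping, for which completeness of $\mathcal{O}_K$ is exactly the tool that closes the argument.
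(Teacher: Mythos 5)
First, a point of comparison: the paper does not actually prove this statement---it is quoted as a known result, with the reader referred to the cited reference (Neukirch) for a proof---so there is no in-paper argument to measure yours against. Your proposal follows the standard successive-approximation route (B\'ezout in $k[X]$ plus $\pi$-adic completeness of $\mathcal{O}_K$), which is exactly the classical proof. One minor remark: assuming $g_0$ monic is harmless but not needed; it suffices that the leading coefficient of $g_0$ be a unit, which is automatic because it reduces to the nonzero leading coefficient of $\overline{g}$ in the field $k$, and division with remainder by a polynomial with unit leading coefficient works over any commutative ring (alternatively, normalize $\overline{g}$ to be monic and absorb the unit into $\overline{h}$).

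There is, however, one genuine gap: you control the degree of the correction to $g$ but not of the correction to $h$. After Euclidean division $\gamma = qg_0+\gamma'$, the term you ``absorb into $\eta$'' is $\eta'=r_m a+qh_0$, and nothing in your argument bounds $\deg\eta'$. If $\deg h_m$ is allowed to grow at each step, the limit $h$ is a priori only a power series, not a polynomial, and your closing remark that ``the degree of $h$ is forced to be $\deg f-\deg g$ by monicity of $g$'' presupposes precisely what is in doubt. The standard repair must be made explicit: set $d=\deg f$ and $m_0=\deg\overline{g}$, choose $\deg h_0\le d-m_0$ (possible since $\deg\overline{h}\le d-m_0$), and carry the inductive hypotheses $\deg g_m=m_0$, $\deg h_m\le d-m_0$, which give $\deg r_m\le d$. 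Then $\eta'g_0\equiv r_m-\gamma'h_0\pmod{\pi}$ has degree at most $d$ modulo $\pi$, and since $\overline{g_0}$ has degree exactly $m_0$ (unit leading coefficient), the reduction $\overline{\eta'}$ has degree at most $d-m_0$; replace $\eta'$ by a lift of $\overline{\eta'}$ of degree at most $d-m_0$, which does not disturb the congruence modulo $\pi$. With that insertion the induction closes, the sequence $(h_m)$ stays inside the finite-rank module of polynomials of degree at most $d-m_0$, and your limit argument becomes valid. The remaining steps---the collapse of the product modulo $\pi^{m+2}$, the B\'ezout solution of the linear congruence, and the Cauchy/limit bookkeeping---are correct as written.
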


It is well known that if $\mathcal C$ is a  cyclic code of length $m$ over the finite field $\mathbb{F}_{p^r}=\mathcal{O}_K/\langle\pi\rangle$ then $\mathcal C$ is generated by a monic factor $g(X)$ of $X^n-1$:  
$$X^n-1=\overline{g}(X)\overline{h}(X).$$
Taking into account  the Hensel's Lemma   any decomposition modulo $\pi$ can be generalized to a decomposition modulo $\pi^s$ by 
%of cyclic codes from $\mathbb{F}_{p^r}=\mathcal{O}_K/\pi\mathcal{O}_K$ to any finite chain ring $R=\mathcal{O}_K/\pi^s\mathcal{O}_K$ by 
$X^n-1=g_s(X)h_s(X)[\pi^s]$ 
and therefore to $\mathcal{O}_K$
 as $X^n-1=g(X)h(X).$

\section{Lattices and Codes over Finite Chain Rings}\label{sec:3}

\subsection{Lattices over Integers of $p$-adic Fields}

Let $L$ be a vector space of dimension $n$ over $\mathbb{Q}_p$ and let $\Lambda$ be a $\mathbb{Z}_p$-submodule of $L$ of finite rank associated by a non-degenerate bilinear form 
	$b:\Lambda\times\Lambda\rightarrow \mathbb{Z}_p$. The pair  $(\Lambda,b)$ is called {\it an integral lattice over} $L$.
	The dual lattice of $\Lambda$ over $L$ is given by 
	$$\Lambda^{*}=\lbrace y\in L\textit{ ; } b(y,x) \in \mathbb{Z}_p, \forall x \in \Lambda\rbrace.$$
	$\Lambda$ is unimodular lattice if $\Lambda=\Lambda^{*}.$ 
	If $\Lambda$ is a free lattice with a $\mathbb{Z}_p$-basis $\lbrace x_1,\cdots,x_n\rbrace$, then the matrix given by 
	$G=((x_i,x_j))_{i,j}$
	is the generator matrix corresponding to the lattice $\Lambda.$
	For an integral lattice $\Lambda$, the discriminant group is $d_{\Lambda}=\Lambda^{*}/\Lambda$.
	If $\Lambda$ is free then the discriminant of $\Lambda$ denoted by $disc(\Lambda)$ is
	$$disc(\Lambda)=det(G)=det((x_i,x_j))_{i,j}.$$
	The norm ideal of $\Lambda$ is the $\mathbb{Z}_p$-ideal generated by $\lbrace b(x,x)\mid x\in \Lambda\rbrace$.

Now, let $K$ be a Galois extension over $\mathbb{Q}_p$ of dimension $n$,  $K$ can be seen as a $\mathbb{Q}_p$-vector space of dimension $n$.
Let $\Omega$ be an algebraic closer of $\mathbb{Q}_p$. Since $K$ is a separable extension of $ \mathbb{Q}_p$, there are $n$ distinct $\mathbb{Q}_p$-embeddings $\sigma_1,...,\sigma_n$ from $K$ into $\Omega$. For an element $\alpha\in K$ the norm and the trace maps  are given as follows:
$$\mathrm{N}_{K|\mathbb{Q}_p}(\alpha)=\prod_{i=1}^{n}\sigma_i(\alpha),\qquad
\mathrm{Tr}_{K|\mathbb{Q}_p}(\alpha)=\sum_{i=1}^n \sigma_i(\alpha).$$
Hence, the $\mathbb{Q}_p$-bilinear symmetric form that associates $(x,y)\in K\times K $ to the element $\mathrm{Tr}_{K|\mathbb{Q}_p}(xy)\in \mathbb{Q}_p$ is non-degenerate.

$\mathcal{O}_K$ (the ring of integers of $K$) can be considered as the set of the elements of $K$ which are integral over $\mathbb{Z}_p$, then,  $\mathcal{O}_K$ can be written as follows:
$$\mathcal{O}_K=\mathbb{Z}_p e_1\oplus\cdots \oplus \mathbb{Z}_p e_n$$
where $\{e_1,...,e_n\}$ is a free basis of the $\mathbb{Z}_p$-module $\mathcal{O}_K$. Since for $\alpha\in \mathcal{O}_K$  we can write $\alpha e_i=\sum_{j=1}^n \alpha_{ij}e_j$ where $\alpha_{ij}\in \mathbb{Z}_p$, then,  $\mathrm{Tr}_{K|\mathbb{Q}_p}(\alpha)$  is the trace of the $n\times n $ matrix $\alpha_{ij},$  and $\mathrm{Tr}_{K|\mathbb{Q}_p}(x)\in \mathbb{Z}_p$.

As $\mathcal{O}_K$ is a free $\mathbb{Z}_p$-module of rank $n$, with a basis $\{e_1,...,e_n\}$ over $\mathbb{Z}_p$ then   a generator matrix of the lattice is written as follows:  
$$M=\begin{pmatrix}
\sigma_1(e_1)& \sigma_2(e_1)&\cdots & \sigma_n(e_1)\\
\vdots       &   \vdots     &       & \vdots \\
\sigma_1(e_n)& \sigma_2(e_2)& \cdots & \sigma_n(e_n)
\end{pmatrix}.$$

The discriminant of $K$ over $\mathbb{Q}_p$  is denoted by $D_K$ and it is the discriminant of the lattice $\Lambda_b=(\mathcal{O}_K,b)$. Based on~\cite{book} it fcan be written as follows: 
$$D_K=\mathrm{det}\left(\mathrm{Tr}_{K|\mathbb{Q}_p}\left(e_ie_j\right)_{i,j=1}^{n}\right).$$
If $I$ is an ideal of $\mathcal{O}_K$ then $I$ is a  $\mathbb{Z}_p$-submodule. The following section  considers  the ideals of $\mathcal{O}_K$ as lattices by defining ideal lattices which are the general framework for the construction $A$ of lattices.

\subsection{$\mathbb{Z}_p^n$-Ideal Lattices }

Let $I$ be an ideal of $\mathcal{O}_K$, note that $I$  is also  an $\mathcal{O}_K$-submodule of $K$ different from $\lbrace 0\rbrace$. The norm $\mathrm{N}_{K|\mathbb{Q}_p}(I)$ of $I$ is defined as the $\mathbb{Z}_p$-submodule generated by $\mathrm{N}_{K|\mathbb{Q}_p}(x)$ for all $x\in I$. 
\begin{lemma}
Let $I$ be an ideal of $\mathcal{O}_K$ then $\mathrm{N}_{K|\mathbb{Q}_p}(I)=p^{ri}$ for some $i>0$.
\end{lemma}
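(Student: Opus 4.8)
The plan is to reduce everything to the ideal structure of the discrete valuation ring $\mathcal{O}_K$ together with the multiplicativity of $\mathrm{N}_{K|\mathbb{Q}_p}$. First I would recall the chain of ideals displayed above, which shows that every nonzero ideal of $\mathcal{O}_K$ is principal, generated by a power of the uniformizer. Hence I may write $I=\pi^{i}\mathcal{O}_K$ for a unique integer $i\geq 0$, with $i>0$ exactly when $I$ is a proper ideal. This replaces the problem of describing the $\mathbb{Z}_p$-submodule $\mathrm{N}_{K|\mathbb{Q}_p}(I)$ by the problem of describing the norms of the elements $\pi^{i}y$ with $y\in\mathcal{O}_K$.

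Next I would exploit multiplicativity: for $x=\pi^{i}y$ one has $\mathrm{N}_{K|\mathbb{Q}_p}(x)=\mathrm{N}_{K|\mathbb{Q}_p}(\pi)^{i}\,\mathrm{N}_{K|\mathbb{Q}_p}(y)$. Since $y\in\mathcal{O}_K$ is integral over $\mathbb{Z}_p$, its norm lies in $\mathbb{Z}_p$, so $v_p(\mathrm{N}_{K|\mathbb{Q}_p}(y))\geq 0$, with equality when $y$ is a unit (e.g.\ $y=1$, whose norm is a unit). As $\mathbb{Z}_p$ is a discrete valuation ring, the ideal generated by a subset is generated by any of its elements of least $p$-adic valuation; here that least value is attained at $y=1$. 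Therefore $\mathrm{N}_{K|\mathbb{Q}_p}(I)=\langle \mathrm{N}_{K|\mathbb{Q}_p}(\pi)\rangle^{\,i}$, and it remains only to compute $v_p(\mathrm{N}_{K|\mathbb{Q}_p}(\pi))$.

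The evaluation of $v_p(\mathrm{N}_{K|\mathbb{Q}_p}(\pi))$ is the crux. I would start from the ramification relation $\vert p\vert_p=\vert\pi\vert_p^{\,e}$ recorded earlier, which says $p=\pi^{e}u$ for some unit $u\in\mathcal{O}_K^{\times}$. Taking norms and using $\mathrm{N}_{K|\mathbb{Q}_p}(a)=a^{n}$ for $a\in\mathbb{Q}_p$ gives $p^{n}=\mathrm{N}_{K|\mathbb{Q}_p}(\pi)^{e}\,\mathrm{N}_{K|\mathbb{Q}_p}(u)$, where $\mathrm{N}_{K|\mathbb{Q}_p}(u)$ is a unit of $\mathbb{Z}_p$. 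Comparing $p$-adic valuations yields $e\,v_p(\mathrm{N}_{K|\mathbb{Q}_p}(\pi))=n=ef$, hence $v_p(\mathrm{N}_{K|\mathbb{Q}_p}(\pi))=f=r$, i.e.\ $\mathrm{N}_{K|\mathbb{Q}_p}(\pi)=p^{r}\cdot(\text{unit})$. Equivalently, one reads this directly off the defining relation $w_p(x)=\tfrac1n v_p(\mathrm{N}_{K|\mathbb{Q}_p}(x))$ by taking $x=\pi$. Substituting back gives $\mathrm{N}_{K|\mathbb{Q}_p}(I)=\langle p^{ri}\rangle = p^{ri}$, as claimed.

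I expect the only genuine obstacle to be conceptual rather than computational: the norm of an ideal is defined here as the $\mathbb{Z}_p$-submodule generated by the norms of \emph{all} its elements, so the argument must carefully justify that this generated ideal is controlled by its minimal-valuation generator (which is where the discrete valuation structure and the choice $y=1$ enter), and must not silently identify it with the index norm $\vert\mathcal{O}_K/I\vert$. Once $v_p(\mathrm{N}_{K|\mathbb{Q}_p}(\pi))=r$ is pinned down, the remainder is routine bookkeeping.
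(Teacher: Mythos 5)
Your proof is correct and follows essentially the same route as the paper's: write $I=\langle\pi^{i}\rangle$ using that $\mathcal{O}_K$ is a discrete valuation ring (the paper says principal ideal domain), then reduce everything to the norm of $\pi$ and multiply. You additionally supply two details the paper glosses over --- the derivation $v_p(\mathrm{N}_{K|\mathbb{Q}_p}(\pi))=f=r$ from $p=\pi^{e}u$, and the justification that the $\mathbb{Z}_p$-submodule generated by the norms of \emph{all} elements of $I$ is controlled by the minimal-valuation generator --- whereas the paper simply asserts $\mathrm{N}_{K|\mathbb{Q}_p}(\langle\pi\rangle)=p$ (a slip: it should read $p^{r}$ for the concluding step $(p^{r})^{i}=p^{ri}$ to follow), so your version is actually the more careful of the two.
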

\begin{proof}
Since $\mathcal{O}_K$ is a principal ideal domain then every ideal $I$ of $\mathcal{O}_K$
is of the form $I=\langle \pi^i\rangle \textit{ ; } i>0,$ and  $ \mathrm{N}_{K|\mathbb{Q}_p}(\langle \pi\rangle)=p$ then $ \mathrm{N}_{K|\mathbb{Q}_p}(\langle \pi^i\rangle)=(p^r)^{i}=p^{ri}. $
\qed
\end{proof}

\begin{definition}
The lattice $(I,b_{I})$ associated to the ideal $I\subseteq \mathcal{O}_K$ and the symmetric bilinear form $b_{I}:I \times I\rightarrow \mathbb{Z}_p$ is given as follows:
$$b_{I}(x,y)=\mathrm{Tr}_{K|\mathbb{Q}_p}(\alpha xy), \textit{   }\forall x,y\in I$$
where $\alpha$ is an element in  $K$ such that $\sigma_i(\alpha)>0$ $\forall i$, 
is called an ideal lattice.
\end{definition}
A generator matrix of $(I,b_{I})$ is given by 
$$G_I=\begin{pmatrix}
\sqrt{\alpha_1}\sigma_1(u_1) & \sqrt{\alpha_2}\sigma_2(u_1) & \cdots & \sqrt{\alpha_n}\sigma_n(u_1)\\
\vdots &\vdots &\cdots &\vdots \\
\sqrt{\alpha_1}\sigma_1(u_n) & \sqrt{\alpha_2}\sigma_2(u_n) &\cdots & \sqrt{\alpha_n}\sigma_n(u_n)
\end{pmatrix}.$$
Its 
 discriminant  is given by (see \cite{book})
$$\mathrm{disc}(\Lambda_I)= \mathrm{N}_{K|\mathbb{Q}_p}(\alpha)\cdot \mathrm{N}(I)_{K|\mathbb{Q}_p}^2\cdot D_K .$$

\noindent
A lattice $\Lambda$ in $\mathcal{O}_{K}^{n}$ is cyclic if $\Lambda$ is an ideal of $\mathcal{O}_{K}[X]/\langle X^n-1\rangle$.

\subsection{Construction $A$ of Lattices}

Let $R=\mathcal{O}_K/\pi^s\mathcal{O}_K$ be a finite chain ring defined as in Section~\ref{sec:chain} and let $C$ be a code over the ring $R$ of length $m$. We consider the map  $\psi:\mathcal{O}_K\rightarrow R$ the reduction modulo the prime $\pi^s$ given in Section~\ref{sec:codes} such that the preimage of $C$ by $\psi$ is the lifted code of $C$ over $\mathcal{O}_K$. Then  $\psi^{-1}(C)$ is an $\mathcal{O}_K$-module of finite rank (see Section~\ref{sec:codes}) and since $\psi^{-1}(C)$ is a $\mathbb{Z}_p$-submodule, then a lattice can be described as follows:
\begin{definition} Given  a code $C$ over the finite chain ring $R=\mathcal{O}_K/\pi^s\mathcal{O}_K$ and the symmetric bilinear form   
	$b_C=\sum_{i=1}^m \mathrm{Tr}_{K|\mathbb{Q}_p}(\alpha x_iy_i)$ where  $\alpha \in\mathcal{O}_K $ the lattice $\Lambda_C=(\Psi^{-1}(C),b_C)$ is defined as
	as the preimage $\psi^{-1}(C)$  of $C$ in $\mathcal{O}_K^m$   together with the symmetric bilinear form $b_C$.
\end{definition}
\begin{lemma}
	The lattice $\Lambda_C=(\Psi^{-1}(C),b_C)$ is an integral lattice.
\end{lemma}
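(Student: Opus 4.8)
The plan is to verify directly the two defining properties in the definition of an integral lattice recalled above: that the underlying set $\psi^{-1}(C)$ is a $\mathbb{Z}_p$-submodule of $K^m$ of finite rank, and that the symmetric bilinear form $b_C$ is non-degenerate and takes its values in $\mathbb{Z}_p$.

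First I would settle the module-theoretic part. The map $\psi\colon\mathcal{O}_K^m\to R^m$ is the coordinate-wise reduction modulo $\pi^s$, hence a homomorphism of $\mathcal{O}_K$-modules, and $C$ is a submodule of $R^m$; therefore the preimage $\psi^{-1}(C)$ is an $\mathcal{O}_K$-submodule of $\mathcal{O}_K^m$. It contains the kernel $\pi^s\mathcal{O}_K^m$, and since $\mathcal{O}_K$ is a free $\mathbb{Z}_p$-module of rank $n$ (by the local freeness theorem recalled in Section~\ref{sec:1}) while $\mathcal{O}_K/\pi^s\mathcal{O}_K=R$ is finite, the submodule $\pi^s\mathcal{O}_K^m$ already has full $\mathbb{Z}_p$-rank $nm$. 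Consequently $\psi^{-1}(C)$ is a $\mathbb{Z}_p$-submodule of $K^m$ of finite rank which spans $K^m$ as a $\mathbb{Q}_p$-vector space; this spanning property will be exactly what is needed at the end.

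Next I would check integrality, which is the essential content of the statement. For any $x,y\in\psi^{-1}(C)$ each coordinate $x_i,y_i$ lies in $\mathcal{O}_K$, and $\alpha\in\mathcal{O}_K$, so $\alpha x_iy_i\in\mathcal{O}_K$ because $\mathcal{O}_K$ is a ring. By the computation preceding the construction of $\mathbb{Z}_p^n$-ideal lattices, the trace of any element of $\mathcal{O}_K$ lies in $\mathbb{Z}_p$; hence $\mathrm{Tr}_{K|\mathbb{Q}_p}(\alpha x_iy_i)\in\mathbb{Z}_p$ for each $i$, and summing over $i$ gives $b_C(x,y)\in\mathbb{Z}_p$. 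Symmetry of $b_C$ is immediate from the commutativity of $\mathcal{O}_K$, and $\mathbb{Z}_p$-bilinearity is clear from $\mathbb{Z}_p$-linearity of the trace.

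Finally I would establish non-degeneracy. The form $(x,y)\mapsto\mathrm{Tr}_{K|\mathbb{Q}_p}(xy)$ on $K$ is non-degenerate, as recalled earlier; since $\alpha\neq 0$ the multiplication-by-$\alpha$ map is a $\mathbb{Q}_p$-linear automorphism of $K$, so $(x,y)\mapsto\mathrm{Tr}_{K|\mathbb{Q}_p}(\alpha xy)$ remains non-degenerate on $K$. As $b_C$ is the orthogonal direct sum of $m$ copies of this form on $K^m$, it is non-degenerate on $K^m$. To pass to the lattice, suppose $x\in\psi^{-1}(C)$ satisfies $b_C(x,y)=0$ for all $y\in\psi^{-1}(C)$; since $\psi^{-1}(C)$ spans $K^m$ over $\mathbb{Q}_p$ this forces $b_C(x,v)=0$ for every $v\in K^m$, whence $x=0$. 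This shows $b_C$ is non-degenerate on $\psi^{-1}(C)$ and completes the verification that $\Lambda_C$ is an integral lattice. I expect the integrality step to be the conceptual heart but the quickest to dispatch, reducing cleanly to the trace-in-$\mathbb{Z}_p$ fact; the one place demanding genuine care is the transfer of non-degeneracy from the ambient space $K^m$ to the lattice, where the full-rank (spanning) property from the first step is precisely the ingredient that makes the argument go through.
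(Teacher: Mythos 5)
Your proof is correct and its central step --- $\alpha x_iy_i\in\mathcal{O}_K$, hence $\mathrm{Tr}_{K|\mathbb{Q}_p}(\alpha x_iy_i)\in\mathbb{Z}_p$ for each coordinate, so $b_C$ is $\mathbb{Z}_p$-valued --- is exactly the paper's argument. You additionally verify the finite-rank and non-degeneracy clauses of the definition of an integral lattice, which the paper's proof omits; those checks are sound, the only caveat being that non-degeneracy requires $\alpha\neq 0$, a hypothesis the paper's definition of $b_C$ does not state explicitly.
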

\begin{proof}
	Let $x,y\in \mathcal{O}_K^m$, then the $Tr_{K/\mathbb{Q}_p}(x_iy_i)\in \mathbb{Z}_p$ for all $i=1,\ldots,m$. Since $\alpha \in\mathcal{O}_K$ then $\mathrm{Tr}(\alpha x_iy_i)$ belongs to $\mathbb{Z}_p$, thus, $b_C(x,y)\in \mathbb{Z}_p$ and therefore, $\Lambda_C$ is an integral lattice.  \qed
\end{proof}
The dual lattice of $(\psi^{-1}(C),b_C)$ is the pair $\Lambda_C^{*}=(\psi^{-1}(C)^{*},b_C)$ defined as follows:
	$$\psi^{-1}(C)^{*}=\lbrace x\in K^m\textit{ ; } b_C(x,y)\in \mathbb{Z}_p, \forall y \in \psi^{-1}(C)\rbrace.$$
Let $A$ and $B$ be two finite $\mathcal{O}_K$-modules such that $B\subset A$ then the quotient $A/B$ is a module of finite length. The invariant of $A/B$  denoted by $\chi(A/B)$(see \cite{book}) is a non-zero ideal of $A$. The following statement is straightforward.
\begin{proposition}
	The discriminant of $\Lambda_{\mathcal C}$ is given by:
	$$\mathrm{disc}(\Lambda_{\mathcal C})=\mathrm{N}(\alpha)^m\cdot  D_K^m\cdot \chi(\mathcal{O}_K^m/{\mathcal C})^2.$$
\end{proposition}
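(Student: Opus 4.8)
The plan is to compute the discriminant of the lattice $\Lambda_{\mathcal C}=(\psi^{-1}(\mathcal C),b_C)$ by comparing it with the discriminant of the ambient lattice $(\mathcal{O}_K^m,b_C)$, which we already know how to handle from the single-ideal case recorded in the previous subsection. The discriminant of the ideal lattice $(I,b_I)$ with $I\subseteq\mathcal{O}_K$ was given as $\mathrm{disc}(\Lambda_I)=\mathrm{N}_{K|\mathbb{Q}_p}(\alpha)\cdot\mathrm{N}_{K|\mathbb{Q}_p}(I)^2\cdot D_K$. Applying this coordinatewise to $\mathcal{O}_K^m$ (where each factor contributes $\mathrm{N}(\alpha)\cdot D_K$, since $\mathrm{N}(\mathcal{O}_K)=1$) gives $\mathrm{disc}(\mathcal{O}_K^m,b_C)=\mathrm{N}(\alpha)^m\cdot D_K^m$. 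So the remaining task is to account for the index of $\psi^{-1}(\mathcal C)$ inside $\mathcal{O}_K^m$.

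First I would record the standard index-scaling law for discriminants: if $\Lambda'\subseteq\Lambda$ are two free $\mathbb{Z}_p$-lattices of full rank carrying the same bilinear form $b$, then $\mathrm{disc}(\Lambda')=[\Lambda:\Lambda']^2\cdot\mathrm{disc}(\Lambda)$, where the index is measured by the module-length invariant $\chi$ introduced just above the statement. Taking $\Lambda=\mathcal{O}_K^m$ and $\Lambda'=\psi^{-1}(\mathcal C)$, the relevant index is exactly $\chi(\mathcal{O}_K^m/\psi^{-1}(\mathcal C))$. Because $\psi$ is the reduction map $\mathcal{O}_K^m\to R^m$ and $\psi^{-1}(\mathcal C)$ is the full preimage of $\mathcal C$, the quotient $\mathcal{O}_K^m/\psi^{-1}(\mathcal C)$ is isomorphic to $R^m/\mathcal C$, so this index coincides with the $\chi$-invariant written in the statement. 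Substituting into the scaling law yields
$$\mathrm{disc}(\Lambda_{\mathcal C})=\chi(\mathcal{O}_K^m/\mathcal C)^2\cdot\mathrm{N}(\alpha)^m\cdot D_K^m,$$
which is the claimed formula.

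Next I would justify each ingredient. The index-scaling identity follows by choosing $\mathbb{Z}_p$-bases of $\Lambda$ and $\Lambda'$ related by a transition matrix $T$ over $\mathbb{Z}_p$: the two Gram matrices satisfy $G'=T\,G\,T^{\mathsf t}$, so $\det G'=(\det T)^2\det G$, and $|\det T|_p^{-1}$ is precisely the module length $\chi(\Lambda/\Lambda')$. That $\psi^{-1}(\mathcal C)$ is a free $\mathbb{Z}_p$-module of full rank $mn$ follows from the Section~\ref{sec:codes} structure theory, since $\mathcal{O}_K$ is itself free of rank $n=ef$ over $\mathbb{Z}_p$ and $\psi^{-1}(\mathcal C)$ contains $\pi^s\mathcal{O}_K^m$, hence has finite index in $\mathcal{O}_K^m$.

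The main obstacle I anticipate is the careful bookkeeping of the $\chi$-invariant: one must verify that the module-theoretic length $\chi(\mathcal{O}_K^m/\psi^{-1}(\mathcal C))$ over $\mathcal{O}_K$ and the $\mathbb{Z}_p$-index entering the discriminant scaling are compatible, i.e. that measuring the index as an $\mathcal{O}_K$-length versus a $\mathbb{Z}_p$-length does not introduce an extra factor of $r$ (the residue degree) or of $\det$-of-norm type. The cleanest way to settle this is to observe that $\chi$ is defined (following \cite{book}) as a $\mathbb{Z}_p$-ideal generated by the norm of the transition determinant, so it already incorporates the norm down to $\mathbb{Q}_p$; with that convention the squared index in the discriminant formula is exactly $\chi(\mathcal{O}_K^m/\mathcal C)^2$ with no stray factors, and the identity $\mathrm{N}(\mathcal{O}_K)=1$ keeps the ambient term clean. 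Once this normalization is fixed, the result is indeed straightforward, as the statement asserts.
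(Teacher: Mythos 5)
Your proof is correct and follows exactly the route the paper intends: the paper itself gives no argument beyond declaring the statement straightforward after recording the single-ideal discriminant formula and the invariant $\chi$, and your computation --- ambient discriminant $\mathrm{N}(\alpha)^m\cdot D_K^m$ of $(\mathcal{O}_K^m,b_C)$ as an orthogonal sum of $m$ copies of $(\mathcal{O}_K,\mathrm{Tr}(\alpha xy))$, multiplied by the squared index $\chi(\mathcal{O}_K^m/\psi^{-1}(\mathcal C))^2$ via the transition-matrix identity $\det G'=(\det T)^2\det G$ --- is precisely the intended filling-in. Your observation that the $\chi$ appearing in the statement must be read as the invariant of $\mathcal{O}_K^m/\psi^{-1}(\mathcal C)$ (the preimage, not $\mathcal C$ itself) and that it must be normalized as a $\mathbb{Z}_p$-ideal correctly identifies and resolves the paper's notational looseness.
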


If  $\mathcal C$ is a free code, then the lifted code given as the preimage of $C$ by $\psi$ is a also free, thus $\psi^{-1}(\mathcal C)$ is isomorphic as a module to $ \mathcal{O}_K^{k}$ where $k=k(\psi^{-1}(\mathcal C))$ is the rank of the lifted code of $C$. Then, the following result follows:

\begin{corollary}
	For a free code ${\mathcal C}$ the discriminant of $\Lambda_{\mathcal C}$ is 
	$$\mathrm{disc}(\Lambda_{\mathcal C})=D_K^m((p^r)^{m-k})^2.$$
\end{corollary}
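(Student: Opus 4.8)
The plan is to specialize the general discriminant formula of the preceding Proposition to the free case, so that the invariant $\chi(\mathcal{O}_K^m/\mathcal{C})$ collapses to an explicit power of $p^r$. Starting from
$$\mathrm{disc}(\Lambda_{\mathcal{C}}) = \mathrm{N}(\alpha)^m \cdot D_K^m \cdot \chi(\mathcal{O}_K^m/\mathcal{C})^2,$$
I would first dispose of the factor $\mathrm{N}(\alpha)^m$: taking $\alpha$ to be a unit of $\mathcal{O}_K$ (for instance $\alpha=1$, which still yields a non-degenerate trace form), we have $\mathrm{N}_{K|\mathbb{Q}_p}(\alpha)=1$, so that term drops out and only $D_K^m$ and the invariant survive. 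Thus the corollary is equivalent to the claim that $\chi(\mathcal{O}_K^m/\mathcal{C}) = (p^r)^{m-k}$ whenever $\mathcal{C}$ is free of rank $k$.

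To compute this invariant I would use freeness together with the preceding Lemma, which guarantees that the lifted code $\psi^{-1}(\mathcal{C})$ is itself free over $\mathcal{O}_K$ and isomorphic, as a module, to $\mathcal{O}_K^{k}$. Putting a generator matrix of $\psi^{-1}(\mathcal{C})$ in the standard form $[\,I_k \mid \tilde A\,]$ (after a coordinate permutation, as in \eqref{eq:sd}), the quotient $\mathcal{O}_K^m/\mathcal{C}$ is governed by the $m-k$ complementary coordinates and decomposes as a direct sum of $m-k$ cyclic $\mathcal{O}_K$-modules. Since $\chi$ is a multiplicative ideal invariant on finite modules, it suffices to evaluate it on each cyclic factor and multiply.

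The decisive step is then the index computation, and here the earlier norm Lemma does the work: because every ideal of $\mathcal{O}_K$ is of the form $\langle \pi^i\rangle$ with $\mathrm{N}_{K|\mathbb{Q}_p}(\langle\pi^i\rangle)=p^{ri}$, each of the $m-k$ complementary directions contributes a factor $p^{r}$ to $\chi$, giving $\chi(\mathcal{O}_K^m/\mathcal{C})=(p^r)^{m-k}$. Substituting back into the Proposition yields
$$\mathrm{disc}(\Lambda_{\mathcal{C}}) = D_K^m\big((p^r)^{m-k}\big)^2,$$
as claimed.

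The part I expect to be the main obstacle is precisely this identification of the $\mathcal{O}_K$-module structure of $\mathcal{O}_K^m/\mathcal{C}$ together with the correct bookkeeping of the exponent: one must verify that freeness of the code forces the index ideal to be exactly $\langle\pi^{m-k}\rangle$, so that each complementary coordinate contributes a single factor $\mathrm{N}(\langle\pi\rangle)=p^r$, and that the multiplicativity of $\chi$ over the direct-sum decomposition is legitimate in this finite-length setting. Once the quotient is pinned down to the stated form and the norm Lemma is applied coordinatewise, the remaining substitution is purely formal.
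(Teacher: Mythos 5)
Your argument is essentially the paper's: the paper gives no explicit proof beyond observing that freeness of $\mathcal{C}$ makes the lift $\psi^{-1}(\mathcal{C})$ free of rank $k$, isomorphic to $\mathcal{O}_K^{k}$, and then substituting into the preceding Proposition, which is exactly your route (your extra remark that $\alpha$ may be taken to be a unit so that $\mathrm{N}(\alpha)^m$ disappears is simply left implicit in the paper). The step you flag as the main obstacle, namely $\chi(\mathcal{O}_K^m/\mathcal{C})=(p^r)^{m-k}$, is likewise asserted without further justification in the paper, so your elaboration via the norm Lemma and the coordinatewise decomposition matches, and does not go beyond, the paper's own level of detail.
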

If the Galois extension $K|\mathbb{Q}_p$ is ramified i.e; $n=e$ and $f=1$, and $\pi^n=p$ and 
 ${\mathcal C}_i$ is a self-orthogonal code of length $m$ over a finite chain ring $R_i=\mathcal{O}_K/\pi^s\mathcal{O}_K$. Then  the following result is obtained: 
\begin{lemma}
	The lattice formed by the lifted code of a self-orthogonal code ${\mathcal C}_i$ over $R_i=\mathcal{O}_K/\pi^i\mathcal{O}_K$ is integral with respect to the bilinear form given by
	$$b_{{\mathcal C}_i}=\sum_{i=1}^{m}\mathrm{Tr}_{K|\mathbb{Q}_p}(x_iy_i/p).$$ 
\end{lemma}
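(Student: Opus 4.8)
The plan is to reduce the integrality of $\Lambda_{\mathcal{C}_i}$ to a statement about a single trace and then exploit self-orthogonality together with the total ramification hypothesis. By definition the lattice $(\psi^{-1}(\mathcal{C}_i),b_{\mathcal{C}_i})$ is integral precisely when $b_{\mathcal{C}_i}(x,y)\in\mathbb{Z}_p$ for every pair $x,y$ in the underlying $\mathbb{Z}_p$-module $\psi^{-1}(\mathcal{C}_i)\subseteq\mathcal{O}_K^m$. First I would use the $\mathbb{Q}_p$-linearity of the trace to pull the summation inside, writing
$$b_{\mathcal{C}_i}(x,y)=\sum_{l=1}^m \mathrm{Tr}_{K|\mathbb{Q}_p}\!\left(\frac{x_l y_l}{p}\right)=\mathrm{Tr}_{K|\mathbb{Q}_p}\!\left(\frac{1}{p}\sum_{l=1}^m x_l y_l\right),$$
so that everything reduces to controlling the single element $z:=p^{-1}\sum_{l} x_l y_l\in K$.

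Next I would translate the hypothesis of self-orthogonality. Since the reduction map $\psi\colon\mathcal{O}_K\to R_i=\mathcal{O}_K/\pi^i\mathcal{O}_K$ is a ring homomorphism, the images $\psi(x),\psi(y)$ lie in $\mathcal{C}_i$, and the inclusion $\mathcal{C}_i\subseteq\mathcal{C}_i^{\perp}$ forces $\sum_l \psi(x_l)\psi(y_l)=0$ in $R_i$. Pulling this back through $\psi$ gives $\sum_l x_l y_l\in\ker\psi=\pi^i\mathcal{O}_K$. Invoking the standing hypothesis $\pi^n=p$, I would then write $z=\pi^{-n}\sum_l x_l y_l$, so that $v_\pi(z)\geq i-n$, i.e.\ $z\in\pi^{\,i-n}\mathcal{O}_K$.

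The final step is to argue $\mathrm{Tr}_{K|\mathbb{Q}_p}(z)\in\mathbb{Z}_p$, and this is where the main obstacle lies. In the most transparent situation, where self-orthogonality is read at the top of the chain so that $\pi^i\mathcal{O}_K\subseteq p\mathcal{O}_K$ (that is $i\geq n$), one has $z\in\mathcal{O}_K$ and the conclusion is immediate from the integrality of the trace of an algebraic integer, $\mathrm{Tr}_{K|\mathbb{Q}_p}(\mathcal{O}_K)\subseteq\mathbb{Z}_p$, already recorded in Section~\ref{sec:3}. For a general index $i\geq 1$ the element $z$ need not lie in $\mathcal{O}_K$, and one must instead place $z$ in the codifferent $\mathfrak{d}^{-1}_{K|\mathbb{Q}_p}=\{w\in K : \mathrm{Tr}_{K|\mathbb{Q}_p}(w\mathcal{O}_K)\subseteq\mathbb{Z}_p\}=\pi^{-d}\mathcal{O}_K$, where $d=v_\pi(\mathfrak{d}_{K|\mathbb{Q}_p})$ is the exponent of the different. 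Using the standard lower bound $d\geq e-1=n-1$ (with equality exactly in the tamely ramified case), one obtains $i-n\geq 1-n\geq -d$, whence $z\in\pi^{-d}\mathcal{O}_K=\mathfrak{d}^{-1}_{K|\mathbb{Q}_p}$ and therefore $\mathrm{Tr}_{K|\mathbb{Q}_p}(z)\in\mathbb{Z}_p$. I expect the delicate point to be precisely this control of the different; the remaining ingredients are just linearity of the trace and the definition of self-orthogonality.
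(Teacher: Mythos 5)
Your proposal is correct, but the decisive step is justified differently from the paper. Both arguments agree up to the point where self-orthogonality gives $\sum_l x_l y_l\in\pi^i\mathcal{O}_K$ and linearity of the trace reduces everything to showing $\mathrm{Tr}_{K|\mathbb{Q}_p}\bigl(p^{-1}\sum_l x_l y_l\bigr)\in\mathbb{Z}_p$. From there you invoke the codifferent: writing $z=\pi^{-n}\sum_l x_l y_l\in\pi^{i-n}\mathcal{O}_K$ and using the bound $d=v_\pi(\mathfrak{d}_{K|\mathbb{Q}_p})\geq e-1=n-1$ for the totally ramified extension, you place $z$ in $\mathfrak{d}^{-1}_{K|\mathbb{Q}_p}=\pi^{-d}\mathcal{O}_K$ and conclude by the defining property of the codifferent. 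The paper instead argues directly with valuations of conjugates: since $\pi$ is the unique prime above $p$, every conjugate of $w=\sum_l x_l y_l$ lies in the maximal ideal of the valuation ring of $\Omega$, so $\mathrm{Tr}_{K|\mathbb{Q}_p}(w)$ is an element of $\mathbb{Z}_p$ of strictly positive valuation, hence lies in $p\mathbb{Z}_p$, and dividing by $p$ lands in $\mathbb{Z}_p$. The two routes prove the same inclusion $\mathrm{Tr}_{K|\mathbb{Q}_p}(\pi\mathcal{O}_K)\subseteq p\mathbb{Z}_p$; the paper's is more elementary and self-contained (it needs nothing beyond the uniqueness of the extended valuation), while yours appeals to the standard machinery of the different, which is the canonical tool for trace-integrality questions in the ideal-lattice literature and makes transparent exactly where total ramification is used (to get $d\geq n-1$). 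Your version also cleanly isolates the case $i\geq n$, where no ramification input is needed at all. Both are valid; neither generalizes the other in a way that matters here.
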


\begin{proof}
	Let $x=(x_1,...,x_m)$ and $y=(y_1,...,y_m)$ in $\Lambda_C$, then 
$$
	\rho(x\cdot y)=\rho \left(\sum^m_{i=1}x_iy_i\right)= \sum_{i=1}^m\rho(x_i)\rho(y_i)= \rho(x)\cdot\rho(y)=0 .
$$
	Since $\rho(x)\cdot\rho(y)\in C$ and $C\subset C^{\perp}$, then:
	$$\sum_{i=1}^{m} x_{i}y_i=x\cdot y\equiv 0 \mod  \pi^s. $$
 Since $\pi$ is the only prime above $p$, all conjugates of $\sum_{i=1}^{m}x_i y_i$ must lie in $\pi$, and thus this is also true for its trace. In other words, $\mathrm{Tr}_{K|\mathbb{Q}_p}(x_i y_i)\in \pi^s$, thus
	$\mathrm{Tr}_{K|\mathbb{Q}_p}(x_i y_i) \in p\mathbb{Z}_p.$
	Therefore, by the linearity of the trace we have: 
	$$\langle x,y\rangle=\sum_{i=1}^m \mathrm{Tr}_{K|\mathbb{Q}_p}\left(\frac{x_i y_i}{p}\right) = \dfrac{1}{p} \cdot \mathrm{Tr}_{K|\mathbb{Q}}\left(\sum_{i=1}^m x_i y_i\right)$$
and $\Lambda_C$ is integral.\qed
	\end{proof}
		
		\begin{example} This is an example of the construction above for a $p$-adic cyclotomic field.	
			Let $L$ be the field obtained from $\mathbb{Q}_p$ by adjoining a $p$th root of unity $\zeta$,
			where, $[L:\mathbb{Q}_p]= p-1$.
			The ring of integers of $L$ is given by:
			$$\mathcal{O}_L=\left\lbrace  \alpha=\sum_{i=0}^{p-2} a_i \zeta^i \textit{ ; } \alpha_i\in\mathbb{Z}_p \textit{ for } i=0,1,\cdots,p-2\right\rbrace.$$
			The  maximal ideal of $\mathcal{O}_L$ is $\mathfrak{m}_L=\left\langle 1-\zeta \right\rangle.$
			There exist $(p-1)$ distinct embeddings $\sigma_i : L \rightarrow \mathbb{C}_p$, the trace of an element $\alpha\in L$ over $\mathbb{Q}_p$ is $\mathrm{Tr}_{L|\mathbb{Q}_p}(\alpha)= \sum_{i=1}^{p-1}\sigma_i(\alpha)\in \mathbb{Z}_p$.For $x\in \mathbb{Q}_p(\zeta)$ the element  $\bar{x}$ denotes the complex conjugate of $x$ and we consider the symmetric bilinear form $(x,y) \mapsto\mathrm{Tr}_{L|\mathbb{Q}_p} (x\bar{y} )$.\\
			Let $l$ be the subfield of $L$ such that $l=\mathbb{Q}_p(\zeta+\zeta^{-1})$ and $[L:l]=2$, $[l:\mathbb{Q}_p]=\dfrac{p-1}{2}$.
			Moreover, $\mathrm{Tr}_{L|\mathbb{Q}_p}(x\bar{x})= 2\mathrm{Tr}_{l|\mathbb{Q}_p}(x\bar{x})$. Therefore the bilinear form above is even.
			Let $C$ be a code over a finite chain ring $R\simeq \mathcal{O}_L/(1-\zeta)^s\mathcal{O}_L$. The lattice formed by the preimage of $C$  over $\mathcal{O}_L$ associated with the bilinear form $\mathrm{Tr}_{L|\mathbb{Q}_p}$ is integral since $\mathrm{Tr}_{L|\mathbb{Q}_p}(x)\in\mathbb{Z}_p$ which is also even. Therefore, the lattice is unimodular.
		\end{example}
	
	If we consider now  $\mathcal C$  a cyclic code over a finite chain ring $R$, then  the following result is obtained:
	\begin{theorem}
		Let $\mathcal C$ be a cyclic code over $R$. The lattice $\Lambda_\mathcal C=(\Psi^{-1}(C),b_C)$   is a cyclic lattice of $\mathcal{O}_K$.
	\end{theorem}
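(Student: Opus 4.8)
The plan is to reduce the statement to the defining property of a cyclic lattice: by the definition given just above, $\Lambda_{\mathcal C}=(\Psi^{-1}(C),b_C)$ is cyclic precisely when $\Psi^{-1}(C)$ is an ideal of $\mathcal{O}_K[X]/\langle X^m-1\rangle$, and the bilinear form $b_C$ plays no role in this condition. So the entire task is to show that the lifted code $\Psi^{-1}(C)$ is shift-invariant in the ideal-theoretic sense. First I would fix the usual polynomial identifications, sending a vector $(c_0,\ldots,c_{m-1})$ to $c_0+c_1X+\cdots+c_{m-1}X^{m-1}$, which identifies $\mathcal{O}_K^m$ with $\mathcal{O}_K[X]/\langle X^m-1\rangle$ and $R^m$ with $R[X]/\langle X^m-1\rangle$. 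Under these identifications the cyclic shift $(c_0,\ldots,c_{m-1})\mapsto(c_{m-1},c_0,\ldots,c_{m-2})$ is exactly multiplication by $X$.

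Next I would note that the reduction map $\psi\colon\mathcal{O}_K\to R=\mathcal{O}_K/\pi^s\mathcal{O}_K$ of Section~\ref{sec:codes} is a surjective ring homomorphism, so its coordinatewise extension $\Psi$ induces a surjective ring homomorphism $\overline{\Psi}\colon\mathcal{O}_K[X]/\langle X^m-1\rangle\to R[X]/\langle X^m-1\rangle$ given by reducing coefficients. In particular $\Psi$ commutes with multiplication by $X$, that is, with the cyclic shift: $\Psi(Xv)=X\Psi(v)$ for every $v$.

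The core of the argument is then immediate. Since $\mathcal C=C$ is a cyclic code over $R$, it is by definition an ideal of $R[X]/\langle X^m-1\rangle$, and the preimage of an ideal under a ring homomorphism is again an ideal; hence $\Psi^{-1}(C)=\overline{\Psi}^{-1}(C)$ is an ideal of $\mathcal{O}_K[X]/\langle X^m-1\rangle$. If one prefers to avoid quoting this general fact, one argues directly: $\Psi^{-1}(C)$ is already known to be an $\mathcal{O}_K$-submodule of $\mathcal{O}_K^m$, namely the lifted code of $C$ (see Section~\ref{sec:codes}), and for any $v\in\Psi^{-1}(C)$ one has $\Psi(Xv)=X\Psi(v)\in C$ because $C$ is shift-invariant, so $Xv\in\Psi^{-1}(C)$. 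Closure under addition, under the $\mathcal{O}_K$-action, and under multiplication by $X$ together yield closure under multiplication by every element of $\mathcal{O}_K[X]/\langle X^m-1\rangle$, since each such element is an $\mathcal{O}_K$-linear combination of the powers $1,X,\ldots,X^{m-1}$. Thus $\Psi^{-1}(C)$ is an ideal, and $\Lambda_{\mathcal C}$ is a cyclic lattice of $\mathcal{O}_K$.

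The only point that genuinely needs care — the main obstacle, though a modest one — is confirming that $\Psi^{-1}(C)$ really is an $\mathcal{O}_K$-module, so that invariance under the single shift $X$ upgrades to closure under the full ring, and that the polynomial identification turns the coefficientwise reduction into an honest ring homomorphism compatible with that shift. Both are bookkeeping: the first is supplied by the lifted-code construction of Section~\ref{sec:codes}, which already produces an $\mathcal{O}_K$-submodule, and the second follows from $\psi$ being a ring homomorphism. Everything else is formal.
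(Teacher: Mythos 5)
Your proof is correct and follows the same route as the paper's: both reduce the claim to showing that $\Psi^{-1}(C)$ is a cyclic code, i.e.\ an ideal of $\mathcal{O}_K[X]/\langle X^m-1\rangle$. The paper's proof is a one-line assertion of exactly that fact, whereas you supply the justification it omits --- that coefficientwise reduction is a ring homomorphism commuting with the shift, so the preimage of an ideal is an ideal --- which is the genuine content of the argument.
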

	\begin{proof}
		Since $\Psi^{-1}(\mathcal C)$ is a cyclic code of $\mathcal{O}_K$ then the lattice $\Lambda_\mathcal C=(\Psi^{-1}(\mathcal C),b_\mathcal C)$ is cyclic.\qed
	\end{proof}
	
	\begin{example}
		Let $L=\mathbb{Q}_{p^r}$ be the unramified extension of $\mathbb{Q}_p$ of degree $r$ obtained by adjoining to $\mathbb{Q}_p$ a primitive $(p^r-1)$st root of unity. 
		The ring of integers of $L$ is denoted by $\mathcal{O}_L$, the maximal ideal is given by $\mathfrak{m}=\langle p\rangle$ and the residue field is $\mathbb{F}_{p^r}.$
		Let $C$ be a cyclic code over $\mathbb{F}_{p^r}=\mathcal{O}_L/(p),$ then $C$ is generated by a monic factor $g_r(X)$ such that
		$$X^n-1=g_r(X)h_r(X)$$
		of $X^n-1$. Using the Hensel's Lemma  any class of cyclic codes can be generalized from $\mathbb{F}_{p^r}$ to $\mathcal{O}_L$ by 
		$X^n-1=g(X)h(X).$
		Then the lattice formed by the lifted code of $\mathcal C$ is a cyclic lattice over $\mathcal{O}_L$.   
	\end{example}
	
	\section{Conclusions}\label{sec:5}
	This paper presented a general construction of lattices from codes over finite chain rings using $p$-adic fields. The connection between finite chain rings and p-adic fields was highlighted by showing that every finite chain ring of the form $GR(p^n, r)[X]/(g(X), p^{n-1}X^t)$ is isomorphic to a non-trivial quotient of ring integers of p-adic fields.
Based on this connection, the lifting of codes over finite chain rings was generalized. Next, the lattices were defined over $p$-adic integers. Thus, using this definition,  lattices over the ring of integers of a Galois extension of $\mathbb{Q}_p$ from lifted codes over finite chain rings were constructed.

	\bibliography{LamiaCC}{}
	\bibliographystyle{plain}
	
\end{document}